\documentclass{article}  

\newenvironment{tightList}
{\begin{list}{}{\partopsep=\baselineskip
	\parskip=0pt
	\parsep=0pt
	\topsep=0pt
	\itemsep=0pt
	\labelwidth=0pt
	\itemindent=-10pt}}
{\end{list}}

\usepackage{amsmath,amsthm,amsfonts,amssymb,verbatim}
\newcommand{\Real}{\mathbb{R}}

\newtheorem{theorem}{Theorem}[section]
\newtheorem{definition}[theorem]{Definition}
\newtheorem{Lemma}[theorem]{Lemma}
\newtheorem{lemma}[theorem]{Lemma}

 \setlength{\parindent}{0pt}
\setlength{\parskip}{4pt}

\newcommand{\sA}{\mathcal{A}}
\newcommand{\sC}{\mathcal{C}}
\newcommand{\sD}{\mathcal{D}}
\newcommand{\sF}{\mathcal{F}}
\newcommand{\sI}{\mathcal{I}}
\newcommand{\sS}{\mathcal{S}}

\usepackage{url}

\usepackage{datetime}
\yyyymmdddate

\setcounter{secnumdepth}{4}

\usepackage{amsmath}

 \usepackage{graphicx}


\newcommand{\pmf}{p.m.\ }
\newcommand{\pmfNoSpace}{p.m.}

\newcommand{\mlf}{m.l.f.\ }
\newcommand{\mlfNoSpace}{m.l.f.}
\newcommand{\mlfNoPeriod}{m.l.f}

\newcommand{\indexed}[1]{\{{#1}\ldots\}}
\newcommand{\indexedFull}[3]{\{{#1}_{{#2}}\ |\ {#2}\in\mathcal{{#3}}\}}

\DeclareMathOperator{\ttv}{\mathit{TTV}}
\DeclareMathOperator{\tv}{\mathit{TV}}

\usepackage{mathrsfs}  
\newcommand{\vt}[2]{\mathscr{V}({#1},{\mathcal{#2}})}

\newcommand{\vf}[3]{\mathscr{F}({#1},{\mathcal{#2}},{\mathcal{#3}})}

\makeatletter
\renewcommand*{\@fnsymbol}[1]{\ensuremath{\ifcase#1\or \ \or *\or \dagger\or \ddagger\or
   \mathsection\or \mathparagraph\or \|\or **\or \dagger\dagger
   \or \ddagger\ddagger \else\@ctrerr\fi}}
\makeatother


\title{Varilets: \\Additive Decomposition,\\Topological Total Variation, and Filtering \\ of Scalar Fields}
\author{Martin Brooks \\ Apollo Systems Research Corporation \\ \emph{martin.fraser.brooks@icloud.com} \thanks{\copyright 2015 Martin Brooks}}

\begin{document}       
\maketitle 
\begin{abstract}
Continuous interpolation of real-valued data is characterized by \emph{piecewise monotone} functions on a compact metric space.
\emph{Topological total variation} of piecewise monotone function $f:X \to \Real$ is a homeomorphism-invariant  generalization of 1D total variation.
A \emph{varilet basis} is an orthonormal collection of piecewise monotone functions $\{g_i\ |\  i = 1 \ldots n\}$, called \emph{varilets}, such that every linear combination $\sum a_ig_i$ ($a_i \in \Real$) has topological total variation $\sum |a_i|$. A \emph{varilet transform} for $f$ is a varilet basis for which $f =\sum \alpha_ig_i$. Filtered versions of $f$ result from altering the coefficients $\alpha_i$.
\end{abstract}

\newpage
\tableofcontents

\section{Introduction}

Topology has proved to be a powerful tool for discovery of the structure of data \cite{EdelsbrunnerBook, Carlsson2009}. Typically, data is viewed as a finite sample of a continuous function. The  function becomes the object of study, often via  a topological representation such as the contour tree \cite{Carr2000}, Reeb graph \cite{Reeb1946}, Morse-Smale complex  \cite{Bremer2004}, persistence diagram \cite{EdelsbrunnerBook} or persistence barcode \cite{Carlsson2004}. 

This paper presents new results for topological analysis of continuous functions.
Whereas computational topology often utilizes algebraic topology \cite{EdelsbrunnerBook}, we exploit the older field of analytic topology \cite{Whyburn1942}.

Our starting point is real-valued continuous functions on compact metric spaces.
We introduce the broad category of \emph{piecewise monotone} functions,  well-suited to data interpolation.
We introduce  a new measure on functions -- \emph{topological total variation} -- and we introduce \emph{varilets}, an orthonormal basis which additively decomposes both the function and its topological total variation.

The \emph{varilet transform} maps a function to a varilet basis by means of a \emph{lens} parameter.
A lens is collection of upper and lower level set components, providing a multiresolution view of the function.  

We state a mathematical algorithm -- the Varilet Transform Algorithm -- and prove its output correct.


We proceed with an overview, followed by discussion of related work and identification of contributions.  
Two sections then develop the analytic topology, the first defining the Varilet Transform Algorithm, and the second validating its correctness.

\subsection{Overview}
\label{overview}

\emph{Analytic topology} \cite{Whyburn1942},  \emph{continuum theory} \cite{Whyburn1942, HY1961, Nadler1992, Charatonik1998} and \emph{dynamic topology} \cite{Whyburn1970, DW1979} had their heyday in the mid-twentieth century. The  Eilenberg-Whyburn \emph{monotone-light factorization} \cite{Lord1997, walker1974} is a powerful result concerning functions on compact metric spaces. 

This paper uses the monotone-light factorization as the foundation for topological analysis of real-valued functions. 
Varilets are an elementary application of analytic topology. 

Continuous function $f:X \to Y$ is \emph{monotone} when $f^{-1}y$ is connected for all $y \in Y$; ; thus $f^{-1}$ carries connected sets to  connected sets.
$f$ is \emph{light} when $f^{-1}y$ is totally disconnected for all $y \in Y$.
The monotone-light factorization \cite{Whyburn1942, Lord1997} states that there exists a unique compact metric space $M$ -- called $f$'s \emph{middle space} -- such that  $f =  \lambda \circ  \mu$, where $\mu :X \to M$ is monotone and $\lambda :M  \to Y$ is light.

The middle space $M$ of function $f:X \to Y$ is the quotient of the domain that identifies all points which, for some $y \in Y$, lie in the same connected component of  $f^{-1}y$. $f$'s monotone factor $\mu$ is the quotient map from $X$ to $M$; $f$'s light factor $\lambda$ assigns each point $p \in M$ the value $f(\mu^{-1}p) \in Y$. Thus $f = \lambda \circ  \mu$. 

Piecewise monotonicity is defined in terms of the middle space. The middle space $M$ of \emph{piecewise monotone} scalar field $f:X \to \Real$ is a graph having finitely many vertices and edges, with each edge having the topology of a closed real interval. $M$ is identical to $f$'s Reeb graph \cite{Reeb1946}.  The monotone factor $\mu$ locates extrema, saddles  and contours in the domain $X$; the light factor $\lambda$ provides their numerical values. 

The monotone-light factorization enables one to proceed by first defining constructions using middle space $M$ and light factor $\lambda$, and then applying $\mu^{-1}$ to pull back to the domain $X$ and function $f$. By restricting the middle space to a finite graph, these constructions enjoy the simple topology of graph continua \cite{Nadler1992}. Monotonicity of $\mu$ makes the theory oblivious to the many complexities of continua.

We use  the light factor $\lambda$ to measure length along the edges of $M$: For points $a, b$ on an edge, the length between them is $|\lambda(a) - \lambda(b)|$.
\emph{Topological total variation} $\ttv(f)$  is the sum of all $M$'s edge lengths. 

A \emph{varilet basis} is a finite collection $\{g_i\ |\ i = 1 \ldots n\}$ of real-valued piecewise monotone functions, called \emph{varilets}, such that every linear combination $\sum a_ig_i$ (each $a_i \in \Real$) has topological total variation $\ttv(\sum a_ig_i) = \sum |a_i|$. A varilet basis is normal  and independent in the sense that each $\ttv(g_i) = 1$ and $\sum a_ig_i = \sum b_ig_i$ only when each $a_i = b_i$.

The name ``varilet'' reflects the relationship of the basis functions to topological total variation. 

A \emph{varilet transform} for piecewise monotone $f$ is a varilet basis such that $f$ is a linear combination:  $f = \sum \alpha_ig_i$. 
The Varilet Transform Algorithm is stated as a mathematical algorithm from which computational methods may be derived.

\emph{Varilet filters} for $f$ are created by varying the coefficients, yielding filtered functions $f' = \sum a_ig_i$. Varilet filters manipulate topological total variation analogously to linear filters' manipulation of energy.

\subsection{Related Work}

Varilets fit into the larger context of computational topology and data analysis \cite{EdelsbrunnerBook, Carlsson2009}.
Although there may exist mathematical connections to persistent homology \cite{Edelsbrunner2002} and discrete Morse theory \cite{Forman01}, this paper focuses instead on the monotone-light factorization \cite{Whyburn1942}, which for piecewise monotone functions may be seen as a decorated version of the Reeb graph \cite{Reeb1946}. Sometimes called the contour tree \cite{Carr2000}, and within persistence theory the merge tree \cite{EdelsbrunnerBook}, the Reeb graph has been often used for simplification of scalar fields \cite{Carr2004, Tierny2012} and has been exploited throughout computational topology. 

The present paper has connections to the work of Bauer et al. on persistence and total variation \cite{Bauer2010}, and to work relating topological analysis to signal processing, including Guillemard et al. \cite{Guillemard2013, Guillemard} and Bauer et al. \cite{Bauer2014}. There are similarities in intent, but not in formalism, to Robinson's notion of topological filter \cite{Robinson}.

Our definition of topological total variation is similar to graph total variation as defined by Bresson \cite{graphTV}. Our definition agrees with the usual 1D definition of total variation, but conflicts with most multidimensional definitions \cite{TVdefs}, including that used in image processing \cite{Rudin}.

\subsection{Contributions}

This paper introduces the varilet transform, an additive decomposition of scalar fields by independent normalized summands, which also additively decomposes a generalized total variation measure.

The Varilet Transform Algorithm provides a mathematical skeleton for computational methods. 
The algorithm is proved correct.

Filtering $f$ by varying the coefficients of its varilet transform provides an generalization of simplification of scalar fields \cite{Carr2004, Tierny2012}.

\subsection{Organization}

This paper has two main sections: Section \ref{defs} provides definitions, culminating in the Varilet Transform Algorithm. Section \ref{validation} validates the algorithm, proving that it computes a varilet basis.

\section{Definition of the Varilet Transform Algorithm}
\label{defs}

Following some preliminaries, we define piecewise monotonicity, topological total variation, and then varilet bases, transforms and filters.
Finally, we state the Varilet Transform Algorithm. 

\subsection{Preliminaries}

All spaces in this paper are compact metric spaces.
A continuum is a connected component of a compact metric space; we assume all continua are non-degenerate (no isolated points).
Continua include line segments, disks, spheres, simplexes, graphs having one-dimensional topology on their arcs (e.g. Reeb graphs), compact manifolds, as well the result of (appropriately) attaching together other continua. 

We specify $f$'s monotone-light factorization, abbreviated ``\mlfNoSpace'', by simply listing $\mu M\lambda$, where $\mu$ is the monotone factor, $M$ is the middle space, and $\lambda$ is the light factor.

For continuous $f$ on compact metric space $X$, the middle space $M$ has finitely many connected components; they are  in 1-1 correspondence with $X$'s components. There are no relations among components of $X$,  nor among components of $M$. 

$f$'s middle space $M$ is a compact metric space;  therefore  assertions proven for $f$ also apply to $\lambda$.
The \mlf of $\lambda$ is $1M\lambda$, where $1$ represents the identity function on $M$.

We refer to the connected components of $f^{-1}y$ as \emph{contours}.

For any subset $S$ of a topological space, we denote the interior by $S^{\circ}$, the closure by $\overline{S}$, the boundary by $\partial S$, and the complement by $S^c$.

\subsection{Piecewise Monotone Functions}

This section introduces \emph{piecewise monotone} functions, which will be used throughout. They correspond in principle to use of \emph{tame} functions \cite{Edelsbrunner2007,EdelsbrunnerBook}. 

The monotone-light factorization provides the basis for piecewise monotonicity.  We give a general definition, followed by specialization to real-valued functions.

\begin{definition}[Piecewise Monotone Function]
\label{pmf}
Suppose $f:X \to Y$ has \mlf $\mu M\lambda$.
We say that $\lambda$ is \emph{locally monotone}  at $p \in M $ when $p$ has a neighbourhood upon which $\lambda $ is monotone.
Let $M ^*$ denote the set of all points of $M$ at which $\lambda$ is locally monotone.
Then $f$ is \emph{piecewise monotone} when:
\begin{tightList}
\item[(1)] $M ^*$ is dense in $M$;
\item[(2)] $M ^*$ has finitely many components; and
\item[(3)] $\lambda $ is monotone on the closure of each component of $M ^*$.
\end{tightList}
\end{definition}

The closures of the components of $M ^*$ are the \emph{monotone pieces} referred to in the name, which we abbreviate as ``\pmfNoSpace''.

The light factor $\lambda$ is a homeomorphism on each monotone piece. 

When $f:[0\ 1] \to \Real$, definition \ref{pmf} provides the usual meaning of  piecewise monotone. 

For \pmf scalar field $f:X \to \Real$, each monotone piece of the middle space $M$ is a closed, non-degenerate real interval. Some interval end points are shared between two or more -- but only finitely many -- intervals; these are the points of definition \ref{pmf} at which $\lambda$ fails to be locally monotone. Some interval endpoints are not shared -- in this case $\lambda$ is locally monotone at the endpoint. The monotone pieces form a finite graph; the interval endpoints are the graph vertices; each interval defines a single edge. $M$ may have more than one edge between a pair of vertices, and multiple edges may constitute a loop, but $M$ does not have an edge connecting a vertex to itself. 

The middle space $M$ of \pmf $f:X \to \Real$ is graph-theoretically and topologically identical to $f$'s Reeb graph \cite{Reeb1946}. We will use both its graph structure and its point-set topology, but we will not be concerned with homotopy. When compact metric space $X$ has connected components $X_1 \ldots X_n$, then the middle space $M$ comprises $n$ disjoint graphs $M_1 \ldots M_n$, and each $f|X_i$ is piecewise monotone with middle space $M_i$.


It is well known that when $X$ is simply connected then $M$ is acyclic; when $f:[0\ 1] \to \Real$ then $M$ is linked chain of closed  intervals. 
Graph continua appear in the literature \cite{Nadler1992, Georgakopoulos}, but have not been previously utilized in relation to monotone-light factorization.

For $f:X \to \Real$, the light factor $\lambda$ is numerically strictly monotone along each edge of $M$. For an edge having endpoint at vertex $V$, we use $\lambda$ to characterize the edge as \emph{increasing} or \emph{decreasing} at $V$. If vertex $V$ has both an increasing and a decreasing edge, then it is a \emph{saddle}; in this case $V$ terminates three or more edges. A vertex for which all edges have the same direction is an \emph{extremum}, either a maximum or minimum. The extrema and saddles of $M$ are called \emph{critical points}. (This usage is more general than the classical notion, in the same spirit as homological critical values in \cite{Edelsbrunner2007}.)

We use  $\lambda$ to measure length along the edges of $M$: For any two points $a, b$ on an edge (including endpoints), the length between them is $|\lambda(a) - \lambda(b)|$. 

When $f$ is piecewise monotone, then so is its light factor $\lambda$.

\subsection{Topological Total Variation}

Total variation  $\tv(f)$ for differentiable $f:[0\ 1] \to \Real$ is given by $\tv(f) = \int |f'|$.  On multidimensional domain $X \subset \Real^n$, total variation has a gradient formulation $\tv(f) = \int |\nabla{f}|$. For non-differentiable functions, see definitions, examples and historical discussion in \cite{TVdefs}. 


We provide an alternative definition:

\begin{definition}\label{ttv}
For \pmf  $f:X \to \Real$ having \mlf $\mu M\lambda$, $f$'s \emph{topological total variation} $\ttv(f)$  is the sum of all $M$'s edge lengths.
\end{definition}

$\ttv(f) = \tv(f)$ when $f: [0\ 1] \to \Real$. 

However, when $X$ is multidimensional then $\ttv$ and $\tv$ do not agree, seen as follows. The co-area formula  \cite{TVdefs} in equation  (\ref{per}) expresses total variation  as the integral of level set perimeter lengths $\mathit{Per}(f, y)$,

\begin{equation}
\label{per}
\tv(f) = \int_{- \infty}^{\infty} \mathit{Per}(f, y) \, dy.
\end{equation}

Whereas $\ttv$ is invariant under self-homeomorphisms of $X$,  
equation (\ref{per}) indicates that multidimensional $\tv$ is not; for example, consider a homeomorphism of $X$ that stretches $f$'s level set perimeters.

Topological total variation avoids dependence on the particulars of $f$'s domain $X$ by measuring instead on $f$'s middle space $M$. 
$\ttv(\lambda) = \ttv(f)$.

When compact metric space $X$ has connected components $X_1 \ldots X_n$, then $\ttv(f) = \sum \ttv(f|X_i)$.

\subsection{Varilet Basis}
 
\begin{definition}\label{varilet}
A finite collection $\{g_i:X \to \Real\ |\ i = 1 \ldots n\}$ of \pmf functions is a \emph{varilet basis} when $\ttv(\sum a_ig_i) = \sum |a_i|$ for all choices of $a_i \in \Real$.
 \end{definition}
 
 The functions $g_i$ are called \emph{varilets}.
 A varilet basis is normal and independent in the sense that each $\ttv(g_i) = 1$, and $\ttv(\sum a_ig_i) \equiv 0$ iff each all $a_i = 0$, implying that every linear combination is unique. 
 
\subsection{Varilet Transforms}

 \begin{definition}
 \label{varilet_transform_def}
A \emph{varilet transform} for \pmf $f:X \to \Real$ is a varilet basis $\{g_i\ |\ i = 1 \ldots n\}$ such that there exist positive \emph{amplitudes} $\{\alpha_i\ |\ i = 1 \ldots n\}$ giving $f = \sum \alpha_ig_i$.
 \end{definition}
 
Unlike the Fourier transform, there is no unique varilet transform for a piecewise monotone function; instead there are multiple varilet transforms, in this respect similar to wavelet transforms. 

Section \ref{varilet_transform_section} provides a mathematical algorithm having two input parameters: the function $f$, and a choice of \emph{lens} $\sC$. The Varilet Transform Algorithm outputs a varilet basis $\{g_i\ |\ i = 1 \ldots n\}$ and amplitudes $\{\alpha_i\ |\ i = 1 \ldots n\}$. Section \ref{validation} will validate the algorithm by developing some elementary analytic topology. 

The Varilet Transform Algorithm provides one way to compute varilet transforms, but definition \ref{varilet_transform_def} may admit other methods as well.
  
\subsection{Varilet Filters}

\begin{definition}
Let $\{g_i\ |\ i = 1 \ldots n\}$ be the varilet basis resulting from a varilet transform for $f$.
Every choice of \emph{varilet filter coefficients} $\{a_i \in \Real |\ i = 1 \ldots n\}$ defines a \emph{filtered version of $f$}, $f' = \sum a_ig_i$. 
\end{definition}

Since $\{g_i\ |\ i = 1 \ldots n\}$ is a varilet basis, it follows that $\ttv(f') = \sum |a_i|$.


\subsection{Varilet Transform Algorithm}
\label{varilet_transform_section}

This section introduces the mathematical objects and algorithm for computing varilet transforms.
Objects introduced are the \emph{varilet lens} $\sC$, and \emph{varilet supports} $\sD$.
The lens $\sC$ plays the role of a parameter in the algorithm.

\subsubsection{Varilet lens}

The Varilet Transform Algorithm analyzes a function, the parameter $f$.
This section introduces the only additional parameter, the transform's \emph{lens} $\sC$,
so-called because it determines the selection and resolution of $f$'s analysis in relation to critical points in $f$'s middle space $M$. 

We do not specify how to choose the lens $\sC$, but instead take an axiomatic approach, relating specialized properties of $\sC$ to resulting properties of the varilet transform.
Looking forward to those results, the varilet transform's twin additive decomposition of function $f$ and topological total variation $\ttv(f)$ is a \emph{universal} property of the varilet transform, i.e.\ it holds for all $\sC$. Whereas, in the contexts of image segmentation, simplification and fractal analysis \cite{varilet_image}, we will require lens $\sC$ to have special properties.


\begin{definition}[Varilet Lens]
\label{varilet_lens}
A subset $C \subset M$ is a \emph{constant-boundary region} when $C$ is closed, connected, has nonempty interior, and $\lambda$ is constant on $\partial C$. 

Let $\sI$ be a finite sequence of natural or real numbers that will serve as indices. 
An indexed collection $\sC = \indexedFull{C}{i}{I}$  of constant-boundary regions is \emph{nested} when indices $i < j$ imply either $C_i \supsetneq C_j$ or $C_i \cap C_j= \emptyset$. 

A \emph{varilet lens for $f$} is nonempty nested collection $\sC$ of constant-boundary regions such that  $\underset{\sI}{\bigcup}C_i = M$. 

The \emph{root regions} of varilet lens $\sC$ are the connected components of $M$, and 
the \emph{root indices} are the indices of the root regions. 

For each $i \in \sI$, we define $C_i$'s \emph{successors} $\sS_i$  be the collection of all maximal (by inclusion) $C_j \subset C_i$.
When $C_j$ is a successor of $C_i$, we say that $C_i$ is the \emph{predecessor} of $C_j$. 
\end{definition}
 
When unambiguous, we suppress the index set $\sI$, writing $\indexed{C_i}$ instead.
 
The successor relation exactly reflects the tree of inclusions within $\sC$.

The root regions are the only constant-boundary regions having empty boundary.
The root indices comprise an initial subsequence of $\sI$.

The nesting structure of $\sC$ suggests a \emph{multiresolution lens}, with fine resolution for tightly nested sets in $\sC$ and coarse resolution for sparsely nested sets. 

\subsubsection{Varilet Supports}
\label{supports}

From the varilet lens $\sC$ we derive the \emph{varilet supports} $\sD$, which play a central role in the Varilet Transform Algorithm. 

\begin{definition}
\label{suppports}
Let $\sC = \indexedFull{C}{i}{I}$ be a varilet lens for $f$. Then \emph{$\sC$'s varilet supports} comprise the collection $\sD = \{D_i \ |\ i \in \sI\}$,  where each \emph{varilet support} is defined as $D_i = (\overline{C_i \smallsetminus \underset{j > i}{\bigcup}C_j})$.
\end{definition}

The name ``support'' is discussed in section \ref{algo}, after introducing the Varilet Transform Algorithm.

Due to the simplicity of the finite-graph continuum topology of $f$'s middle space $M$, we can state some elementary properties:
\begin{tightList}
\item[$\bullet$] A varilet support $D_i$ is not necessarily connected, but has only finitely many components, each of which is the closure of a nonempty connected interior. 
\item[$\bullet$] Each component of each $D_i$ is a connected graph fragment cut out of $M$; these fragments overlap only at the cut points, which may be either regular or critical points. 
\item[$\bullet$] $\sD$ covers $M$, and supports can intersect only at their boundaries.  
\item[$\bullet$] Each $C_i = \cup \{D_j\ |\ D_j \subset C_i\}$ and each $\partial C_i \subset \cup \{\partial D_j\ |\ D_j \subset C_i\}$.
\item[$\bullet$] When no $C_j \subsetneq C_i$, then $D_i = C_i$.
\end{tightList}

\subsubsection{Varilet Transform Algorithm}
\label{algo}

The  Varilet Transform Algorithm is a mathematical algorithm resulting in a varilet transform for $f$ based on lens $\sC$. 
Since there are many possible $\sC$, the algorithm can produce as many different varilet transforms for $f$. 

The output of the Varilet Transform Algorithm is an indexed collection of functions $\{g_i :X \to \Real\ |\ i \in \sI\}$, and a correspondingly indexed collection of positive reals $\indexedFull{\alpha}{i}{I}$. 

We equate the collection of functions to the \emph{varilet transform of $f$ with lens $\sC$}, written: 
\begin{equation*}
\vt{f}{C} = \indexedFull{g}{i}{I},
\end{equation*}
and we call $\indexedFull{\alpha}{i}{I}$ the transform's \emph{amplitudes}.

\begin{definition}[Varilet Transform Algorithm] 
\label{varilet_transform_algo}
Suppose \pmf $f:X \to \Real$ has \mlf $\mu M\lambda$.
Let $\sC=\indexedFull{C}{i}{I}$ be a varilet lens for $f$, and let $\sD=\indexed{D_i}$ be the varilet supports. 

Each function $g_i$ will be created by first creating functions $\lambda_i,\gamma_i:M \to \Real$,  then defining $g_i$ on $X$ via $f$'s monotone factor:  $g_i = \gamma_i \circ \mu$.

For each $i \in \sI$, in any order, or asynchronously, define $g_i$ and $\alpha_i$:
\begin{tightList}
\item
\item [(1)] Define $\alpha_i = \ttv(\lambda|D_i)$.
\newline
\newline
\emph{Note: Lemma \ref{restriction-extension} will show $\lambda|D_i$ piecewise monotone.}
\item
\item[(2)] Let $M^*$ be the component of $M$ containing $D_i$. 
\newline\newline
Define $\lambda_i$ on $M^*$ as the unique extension of $\lambda|D_i$ that is constant on each component of $M^* \smallsetminus D_i$. 
\newline
Define $\lambda_i \equiv 0$ on all other components of $M$.
\newline
\newline
\emph{Note: Lemma \ref{flat_extension_lemma} will show that there exists a unique piecewise constant extension.
Lemma \ref{restriction-extension} will show $\lambda_i$ piecewise monotone and $\ttv(\lambda_i) = \alpha_i$.}
\item
\item[(3)] Define $\gamma_i$ on $M^*$ by scaling and shifting $\lambda_i$.
\begin{equation*}
\begin{split}
\forall p &\in M^*, \\
& \gamma_i(p) =
\begin{cases}
\alpha_i^{-1} \lambda_i(p) &\text{when $i$ is a root index}\\
\alpha_i^{-1}(\lambda_i(p) - \lambda(\partial C_i)) &\text{otherwise.}
\end{cases}
\end{split}
\end{equation*}
Define $\gamma_i \equiv 0$ on all other components of $M$.
\newline
\newline
\emph{Note: Lemma \ref{restriction-extension} will show $\gamma_i$ piecewise monotone and $\ttv(\gamma_i) = 1$.}
\item
\item[(4)] Define $g_i = \gamma_i \circ \mu$.
\newline
\newline
\emph{Note: Lemma \ref{restriction-extension} will show $g_i$ piecewise monotone and $\ttv(g_i) = 1$.}

\end{tightList}
\end{definition}

The Varilet Transform Algorithm produces functions for which we have not yet proved any properties; in fact, we must prove that the algorithm creates well-defined functions.  
The next section proves  that $\vt{f}{C} = \indexedFull{g}{i}{I}$ is a varilet basis expressing $f = \sum \alpha_ig_i$.

We can now better motivate the name ``support''.
Each $D_i$ \emph{supports} functions $\lambda_i$ (and $\gamma_i$) in the somewhat nonstandard sense that $\lambda_i$  
is nowhere constant on $D_i$ but is constant on each component of $M \smallsetminus D_i$.
``Support'' usually means ``non-zero'' but here means ``non-constant''.

\newpage

\section{Validation of the Varilet Transform Algorithm}
\label{validation}
In this section we prove:

\begin{theorem}[Varilet Transform Theorem]
\label{transform} 
The result of the Varilet Transform Algorithm,
$\vt{f}{C} = \indexedFull{g}{i}{I}$, is a varilet basis, and $f = \sum \alpha_ig_i$. 
\end{theorem}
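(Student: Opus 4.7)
The plan is to work entirely in the middle space. Since $g_i = \gamma_i \circ \mu$ and $f = \lambda \circ \mu$ factor through the monotone surjection $\mu$, and since pre-composition with $\mu$ preserves the middle space and light factor of any piecewise monotone $\tau:M\to\Real$, it suffices to prove the identities $\lambda = \sum_i \alpha_i \gamma_i$ and $\ttv(\sum_i a_i \gamma_i) = \sum_i |a_i|$ on $M$ for every choice of $(a_i) \in \Real^n$. The cited lemmas already supply that each $\lambda_i$, $\gamma_i$, and $g_i$ is well-defined and piecewise monotone, with $\ttv(\gamma_i) = \ttv(g_i) = 1$.

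For the first identity, fix $p$ in a component $M^*$ of $M$ with root index $r$, and list the chain $C_{j_1} \supsetneq \cdots \supsetneq C_{j_k}$ of non-root lens members containing $p$ (empty when $k = 0$). By the nesting axiom, $C_{j_{l+1}}$ is the unique immediate successor of $C_{j_l}$ containing $p$, so by the flat-extension rule $\lambda_{j_l}(p) = \lambda(\partial C_{j_{l+1}})$ for $l < k$. At the innermost step, $p \in D_{j_k}$ gives $\lambda_{j_k}(p) = \lambda(p)$, and at the root, $\lambda_r(p) = \lambda(\partial C_{j_1})$ (or $\lambda(p)$ when $k = 0$). For any other non-root lens member $C_j$, we have $\alpha_j \gamma_j(p) = 0$: either $C_j$ lies in a different component of $M$ from $p$ (so $\gamma_j \equiv 0$), or $p$ lies exterior to $C_j$, giving $\lambda_j(p) = \lambda(\partial C_j)$. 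Writing $c_m = \lambda(\partial C_m)$, the surviving terms telescope:
\begin{equation*}
\sum_i \alpha_i \gamma_i(p) \;=\; c_{j_1} + \sum_{l=1}^{k-1}(c_{j_{l+1}} - c_{j_l}) + (\lambda(p) - c_{j_k}) \;=\; \lambda(p),
\end{equation*}
with the $k = 0$ case collapsing to $\lambda_r(p) = \lambda(p)$.

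For the varilet-basis identity, the key structural observation is that for every $j \ne i$, $\gamma_j$ is constant on each connected component of $D_i^{\circ}$. This is verified case by case on the nesting tree: if $C_j$ is an ancestor of $C_i$ then $D_i^{\circ}$ sits inside an interior-successor component of $C_j^{\circ} \smallsetminus D_j$ where $\lambda_j$ is constant; if $C_j$ is disjoint from $C_i$ then $D_i^{\circ}$ lies exterior to $C_j$, where $\lambda_j = \lambda(\partial C_j)$; and if $C_j$ is a descendant ($j > i$) then $D_i^{\circ} \cap C_j = \emptyset$, so $D_i^{\circ}$ again lies exterior to $D_j$. Consequently, setting $\tau = \sum_k a_k \gamma_k$, on each $D_i$ we have $\tau = a_i \alpha_i^{-1} \lambda + \mathrm{const}$, which is $\lambda$-monotone with slope $|a_i|\alpha_i^{-1}$. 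Partitioning each edge $E$ of $M$ into sub-arcs $E \cap D_i$ and summing total variations,
\begin{equation*}
\ttv(\tau) \;=\; \sum_{E} \tv(\tau|_E) \;=\; \sum_i |a_i|\,\alpha_i^{-1}\,\ttv(\lambda|D_i) \;=\; \sum_i |a_i|,
\end{equation*}
using $\alpha_i = \ttv(\lambda|D_i)$ from step (1) of the algorithm.

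The main obstacle will be the edge-decomposition identity $\ttv(\tau) = \sum_E \tv(\tau|_E)$ for a piecewise monotone $\tau$ on the graph continuum $M$. The subtlety is that $\tau$-monotone arcs may extend across vertices of $M$, combining variation contributions from adjacent edges, and that where some $a_i$ vanish whole subarcs of $M$ may collapse to single points of $\tau$'s middle space. Both phenomena are length-neutral --- combining adjacent monotone pieces simply adds endpoint differences, and collapsing constant pieces removes only zero-length contributions --- but verifying this rigorously requires a vertex-by-vertex analysis on $M$, exploiting the lens-derived piecewise structure of $\tau$ together with the fact that edges of $M$ meet only at vertices.
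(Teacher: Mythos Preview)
Your proposal is correct and arrives at the same destination as the paper, but along a somewhat different route.

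For the additive decomposition $\lambda = \sum_i \alpha_i\gamma_i$, your pointwise telescoping argument along the chain $C_r \supsetneq C_{j_1} \supsetneq \cdots \supsetneq C_{j_k} \ni p$ is more direct than the paper's approach. The paper instead proves a Link Lemma (pairing boundary points $p_i \in \partial D_i$, $q_j \in \partial D_j$ for each predecessor--successor pair) and then runs an induction over increasing indices, showing $\pi_i|E_i = \lambda|E_i$ for $\pi_i = \sum_{j\le i}\alpha_j\gamma_j$ and $E_i = \bigcup_{j\le i} D_j$. Your telescope is essentially the same induction unwound along one chain; it buys brevity at the cost of the reusable Link Lemma, which the paper later exploits in the basis proof.

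For the varilet-basis identity, your observation that $\tau|_{D_i} = a_i\alpha_i^{-1}\lambda + \text{const}$ (componentwise) is exactly the content of the paper's \emph{filter factor} $\psi_{\sA}$, which the paper defines by an explicit link recursion and then proves equal to $\sum_i a_i\gamma_i$. So the two arguments coincide up to this point. The divergence is in how the crux --- your edge-decomposition identity $\ttv(\tau) = \sum_E \tv(\tau|_E)$ --- is handled. The paper does not prove this as a general fact about piecewise monotone functions on graph continua; instead it explicitly identifies the monotone--light factorization of $\tau\circ\mu$ via a \emph{filter quotient} map $\phi_{\sA}:M\to M'$ that collapses each connected component of $\bigcup_{a_j=0} D_j$ to a point, checks that $\phi_{\sA}$ is monotone and $\psi_{\sA}\circ\phi_{\sA}^r$ is light, and then applies the $\ttv$ Decomposition Lemma on the new middle space $M'$ with the cover $\{\phi_{\sA}(D_i) : a_i\ne 0\}$. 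Your ``vertex-by-vertex analysis'' would amount to the same quotient construction: the only way two points of $M$ can lie in the same $\tau$-contour is via a $\tau$-constant path, and those paths are precisely the components of $\bigcup_{a_j=0} D_j$. Once this is made precise, your edge sum and the paper's support sum agree term by term.

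In short: your telescoping sharpens the first half, and your second half is the paper's filter-factor/filter-quotient argument with the machinery left implicit. The obstacle you flag is real, and resolving it cleanly reproduces the paper's Filter Quotient Lemma.
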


The rather involved proof compensates for a lack of preexisting mathematics by creating a small theory having its own vocabulary.  
Almost all work uses the middle space and light factor as surrogates for domain and function, using the monotone factor to connect the two when required.

Section \ref{well_defined} establishes that the Varilet Transform Algorithm is well-defined. Section \ref{additive_decomposition} shows that the algorithm additively decomposes $f$ and $\ttv(f)$. Section \ref{varilet_basis_section} shows that the algorithm creates is a varilet basis, completing proof of theorem \ref{transform}.

Throughout, we work with \pmf $f:X \to \Real$ having \mlf $\mu M\lambda$.

\subsection{Flat Extensions and Constant-Boundary Functions}
\label{functions}

We start by identifying two classes of functions on $f$'s middle space $M$. 

\begin{definition}[Flat Extension]
\label{flat-ext}
Consider any closed $D \subset M$ and any function $\pi:D \to \Real$, and let $\pi^*:M \to \Real$ be an extension of $\pi$. Then $\pi^*$ is a \emph{flat extension} when $\pi^*$ is constant on each component of $M \smallsetminus D$.
\end{definition}

Varilet lens $\sC$ causes the varilet transform and filtered versions of $f$ to be similar, in the sense that they all have form $\pi \circ \mu$, where $\pi$ is the following type of function.

\begin{definition}[Constant-Boundary Functions]
$\pi:M \to \Real$ is a \emph{constant-boundary function for $\sC$} when $\pi$ is constant on each $\partial C_i$, for $i \in \sI$.
\end{definition}

$\lambda$ is constant-boundary for $\sC$.

In this paper we will define new constant-boundary functions $\pi$, identifying useful relationships between functions $f$ and  $f' = \pi \circ \mu$, where $\pi$ is substituted for $\lambda$ in $f$'s monotone-light factorization.

$\sC$'s constant-boundary functions are closed under composition with any continuous function $\zeta:\Real^n \to \Real$, i.e.\ $\pi(p) = \zeta(\pi_1(p) \ldots \pi_n(p))$ is constant-boundary, for any choices of constant-boundary $\pi_1 \ldots \pi_n$.

In the Varilet Transform Algorithm, once we have established that the functions $\lambda_i, \gamma_i$ are well-defined, it will follow that each is  constant-boundary for $\sC$.

\subsection{Varilet Transform Algorithm Well-Defined}
\label{well_defined}

In this section we validate the notes attached to the Varilet Transform Algorithm, thereby establishing that each function $\lambda_i, \gamma_i$ and $g_i$ is well-defined, \pmfNoSpace, and has  topological total variation as indicated in the statement of the algorithm. 

Section \ref{restrictions_&_extensions} proves elementary properties of restrictions and extensions to piecewise monotone functions. 
Section \ref{flat_extension} proves existence and uniqueness of flat extensions to restrictions of constant-boundary functions, completing the proof that the Varilet Transform Algorithm is well-defined.
 
\subsubsection{Restrictions \& Extensions of Piecewise Monotone Functions}
\label{restrictions_&_extensions}

Statement (a) of the following Restriction-Extension Lemma validates step (1) the Varilet Transform Algorithm. 
Once lemma \ref{flat_extension_lemma} establishes that the functions $\lambda_i, \gamma_i$ and $g_i$ are well defined, it will follow from Restriction-Extension Lemma (b) \& (c) that each is \pmf and has topological total variation as indicated in the algorithm, validating steps (3) \& (4) of the  algorithm. 

\begin{lemma}[Restriction-Extension]
\label{restriction-extension}
Choose any closed $D \subset M$ having finitely many components each having nonempty interior, and let the restriction $f' = f|(\mu^{-1}D)$.
\begin{tightList}
\item[(a)]  $f'$ is \pmf and has \mlf $\mu'M'\lambda'$ where  $\mu' = \mu|(\mu^{-1}D)$, $M' = D$, and $\lambda' = \lambda|D$.
\item[(b)] $\ttv(f')$ is equal to the sum of edge lengths in $D$, noting that edge lengths  in $M$ and $M'$ are determined by $\lambda$ and $\lambda'$ respectively.
\item[(c)] Suppose $g$ is a flat extension of $f'$; then $g$ is piecewise monotone and $\ttv(g) = \ttv(f')$.
\end{tightList}
\end{lemma}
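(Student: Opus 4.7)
The plan is to handle (a), (b), (c) in order, leaning throughout on uniqueness of the monotone-light factorization: to identify the mlf of a given function it suffices to exhibit any factorization through a monotone map followed by a light map. For (a), I would verify directly that $\mu'M'\lambda'$ satisfies the mlf axioms. The map $\mu'$ inherits monotonicity from $\mu$ since its fibers $(\mu')^{-1}(p) = \mu^{-1}(p)$ (for $p \in D$) are connected. The map $\lambda'$ is light since $(\lambda')^{-1}(y) = \lambda^{-1}(y) \cap D$ is a closed subset of a totally disconnected set. The factorization $f' = \lambda' \circ \mu'$ on $\mu^{-1}D$ is immediate, so uniqueness of the mlf forces $M' = D$. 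For piecewise monotonicity I would show that $M^* \cap D^{\circ}$ is contained in the locally-monotone set of $\lambda'$, then use the finite-graph structure of $M$ together with the fact that each component of $D$ is the closure of its interior to get density and finiteness of the components of $(M')^*$, and finally inherit monotonicity of $\lambda'$ on each closed component from that of $\lambda$ on the containing monotone piece of $M$.

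For (b), the edges of $M' = D$ are the closures of the components of $(M')^*$, each a monotone piece of $\lambda' = \lambda|D$, so by the length convention every edge with endpoints $a, b$ contributes length $|\lambda'(a) - \lambda'(b)| = |\lambda(a) - \lambda(b)|$, and summing yields $\ttv(f')$ by definition. For (c), I first reduce to a middle-space statement: since $g$ extends $f'$ and is constant on each component of $X \smallsetminus \mu^{-1}D$, and since monotonicity of $\mu$ puts the components of $\mu^{-1}(M \smallsetminus D)$ in bijection with those of $M \smallsetminus D$, $g$ descends through $\mu$ to a continuous $\nu : M \to \Real$ with $g = \nu \circ \mu$ and $\nu|D = \lambda|D$; this $\nu$ is a flat extension of $\lambda|D$ in the sense of Definition \ref{flat-ext}. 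I then factor $\nu = \lambda_\nu \circ \mu_\nu$ by its own mlf with middle space $N$. Because compositions of monotone maps between compact Hausdorff spaces are monotone (the preimage of a connected set under a monotone map is connected, using that such maps are closed), $\mu_\nu \circ \mu$ is monotone, and uniqueness of the mlf gives that $g$ has mlf $(\mu_\nu \circ \mu)\,N\,\lambda_\nu$.

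The main obstacle, and the heart of (c), is the edge-length comparison between $N$ and $D$. For each component $U$ of $M \smallsetminus D$, $\nu$ is constant on $\overline{U}$ with value $\lambda(\partial U)$ --- this value is forced to be common on $\partial U$ by continuity of $\nu$ --- and the quotient $M \to N$ therefore collapses each $\overline{U}$ to a single point while leaving the interiors of $D$'s edges alone. Crucially, no edge of $D$ can be collapsed, because the two endpoints of an edge carry distinct $\lambda$-values whereas the identification equates only points sharing a common $\lambda$-value. Consequently the edges of $N$ stand in bijection with those of $D$ with matching endpoint $\lambda$-values, the two edge-length sums agree, $\ttv(g) = \ttv(f')$, and piecewise monotonicity of $g$ follows since $\lambda_\nu$ inherits a finite system of monotone pieces from $\lambda'$.
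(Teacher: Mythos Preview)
Your argument follows the paper's closely: both identify the \mlf of $f'$ directly as $(\mu|\mu^{-1}D)\,D\,(\lambda|D)$ via uniqueness of the monotone-light factorization, and both handle (c) by recognizing $g$'s middle space as the quotient of $M$ that collapses each component of $\overline{M\smallsetminus D}$ to a point while acting homeomorphically on $D^\circ$. Your intermediate descent of $g$ through $\mu$ to a function $\nu:M\to\Real$ is a useful explicitation of what the paper does in one line, not a different route.

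One imprecision in your treatment of (c): the asserted \emph{bijection} between edges of $N$ and edges of $D$ need not hold. When the quotient glues together boundary points of $D$ and the resulting image point has degree~$2$ with the two incident $D$-edges carrying opposite $\lambda$-direction, that image point is a regular point of $\lambda_\nu$, so those two edges merge into a single edge of $N$. The paper sidesteps this by arguing only that the quotient map is a homeomorphism on $D^\circ$, so that $\lambda_g$ and $\lambda$ measure identical lengths on corresponding arcs. Your conclusion $\ttv(g)=\ttv(f')$ still goes through---monotonicity across the merged point forces $|\lambda(a)-\lambda(c)|=|\lambda(a)-\lambda(b)|+|\lambda(b)-\lambda(c)|$---but the cleanest fix is to drop the edge-bijection claim and argue directly from the homeomorphism on $D^\circ$, as the paper does.
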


\begin{proof}
Because $\mu$ is monotone, $\mu^{-1}D$ has components in 1-1 correspondence with $D$'s components and is therefore a compact subspace,  and thus $f'$ has a monotone-light factorization. 
Every contour of $f'$ is a contour of $f$, so it follows that  $M'$ is homeomorphic to  $D$, and so $\mu' = \mu|(\mu^{-1}D)$ and $\lambda' = \lambda|D$ by uniqueness of monotone-light factorization. $f'$ is \pmf since $M'$ is a finite graph. 
When considered as the middle space of $f'$, $D$ has vertices at what were cut points of edges of $M$. These observations prove (a) \& (b).

We now show (c). Let $\mu_g M_g\lambda_g$ be $g$'s \mlfNoPeriod. For every component $K$ of $\overline{M \smallsetminus D}$, $g$ is constant on $\mu^{-1}K$. Therefore $\mu_g$ maps $\mu^{-1}K$ to a point in $M_g$. Therefore, $M_g$ consists of the components of $D^\circ$,  with some combination of boundary points glued together. This results in a finite graph and therefore $g$ is piecewise monotone. $M_g$ is a quotient of $M$; the quotient map is a homeomorphism on $D^\circ$, from which it follows that $\lambda_g$ measures edge length identically to $\lambda$, and thus $\ttv(g) = \ttv(f')$.
\end{proof}

The following Lemma provides a recipe for computing topological total variation. 

\begin{lemma}[$\ttv$ Decomposition Lemma]
\label{ttv_decomposition}
Let $D_i$, $i = 1 \ldots n$, be any collection of compact subsets covering $M$ that may intersect only at their boundaries.
Then $\ttv(f) = \sum \ttv(\lambda|D_i)$.
\end{lemma}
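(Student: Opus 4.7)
The plan is to reduce the identity to an edge-by-edge calculation on $M$: on each edge of $M$, the pieces cut out by the $D_i$ form a finite cover by closed subintervals with pairwise-disjoint interiors, whose lengths therefore sum to the length of the edge.

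First I would rewrite both sides in terms of edges. By definition $\ttv(f) = \sum_e L(e)$, summed over the finitely many edges $e$ of $M$, where $L(e) = |\lambda(v_1) - \lambda(v_2)|$ for the endpoints of $e$. Applying Restriction-Extension Lemma (b) to $\lambda$ itself, whose \mlf is $1M\lambda$ so that $\mu^{-1}D_i = D_i$, each $\ttv(\lambda|D_i)$ equals the sum of edge lengths of $D_i$ as the middle space of $\lambda|D_i$. The implicit requirement that $\ttv(\lambda|D_i)$ be well-defined via that lemma forces each $D_i$ to have finitely many components each with nonempty interior, mirroring the varilet-support setting.

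Next I would fix an edge $e$ of $M$ and examine the traces $D_i \cap e$. Each $D_i \cap e$ is compact in $e$ and decomposes into finitely many closed subintervals $J_{i,1},\ldots,J_{i,k_i}$. Because the $D_i$ cover $M$ and may meet only at boundaries, the collection $\{J_{i,j}\}_{i,j}$ covers $e$ and its elements have pairwise-disjoint interiors in $e$. Since $\lambda|e$ is a strictly monotone homeomorphism of $e$ onto a closed real interval, length is pulled back from Lebesgue measure on $\Real$, so a finite disjoint-interior cover gives $L(e) = \sum_{i,j} L(J_{i,j})$. Any original vertex of $M$ lying in the interior of some $J_{i,j}$ merely subdivides that subinterval into further edges of $D_i$, and their lengths telescope to $L(J_{i,j})$.

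Assembling and switching the order of summation,
\[\sum_i \ttv(\lambda|D_i) = \sum_i \sum_e \sum_j L(J_{i,j}) = \sum_e L(e) = \ttv(f).\]
The main obstacle is the bookkeeping on each edge $e$: one must check that shared boundary points do not cause double-counting. This is ensured by the disjoint-interior condition, since boundary points are isolated on $e$ and contribute zero length.
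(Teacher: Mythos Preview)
The paper does not supply an explicit proof of this lemma; it is stated immediately after the Restriction--Extension Lemma and treated as a direct consequence of part (b) thereof. Your argument is correct and is exactly the natural way to make that consequence precise: invoke Restriction--Extension (b) to express each $\ttv(\lambda|D_i)$ as a sum of edge lengths in $D_i$, then observe that on each edge $e$ of $M$ the traces $D_i\cap e$ form a finite closed-interval cover with disjoint interiors, so the $\lambda$-lengths add up to $L(e)$.

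Your observation that the hypotheses of the lemma, as stated, are slightly weaker than what Restriction--Extension (b) requires (finitely many components, each with nonempty interior) is well taken; without that, $\ttv(\lambda|D_i)$ need not even be defined in the paper's framework, and $D_i\cap e$ need not decompose into finitely many subintervals. In the paper's applications (the varilet supports $\sD$ and the images $\phi(D_i)$), these regularity conditions always hold, so the gap is in the lemma's statement rather than in its use.
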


The $\ttv$ Decomposition Lemma implies  $\ttv(f) = \sum \alpha_i$, where $\alpha_i$ are the amplitudes output by the Varilet Transform Algorithm.
We will use the lemma again in the proof of the Varilet Transform Theorem, in section \ref{proof_part_B}.

\subsubsection{Flat Extension Lemma}
\label{flat_extension}

The following lemma validates step (2) of the Varilet Transform Algorithm, thereby completing the proof that the Varilet Transform Algorithm is well-defined.

\begin{lemma}[Flat Extension Lemma]
\label{flat_extension_lemma}
Choose a varilet lens $\sC=\indexedFull{C}{i}{I}$ of $f$, and let $\sD=\indexed{D_i}$ be the varilet supports.
Choose any $\pi:M\to \Real$ from  $\sC$'s constant-boundary functions.

Then for every varilet support $D_i$, $i \in \sI$:
\begin{tightList}
\item[(a)] There exists a  flat extension $\pi^*:M \to \Real$ of $\pi|D_i$.
\item[(b)] When $M$ is connected then $\pi^*$ is unique.
\item[(c)] $\pi^*$ is a constant-boundary function for $\sC$.
\end{tightList}
\end{lemma}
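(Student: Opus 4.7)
The plan is to first pin down the topological structure of $M \setminus D_i$ using the nested-lens description, then construct $\pi^*$ by matching $\pi$-values across each component's boundary, and finally verify the constant-boundary property of $\pi^*$ by a short case analysis on how $C_k$ sits relative to $C_i$ in the nesting tree.

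First I would establish the structural identity $D_i = C_i \setminus \bigcup_{C_j \in \sS_i} C_j^{\circ}$. The containment $\supset$ follows from the fact that the closure of $C_i \setminus \bigcup_{C_j \in \sS_i} C_j$ adds back exactly the boundary points $\partial C_j$ of each successor: since successors are pairwise disjoint closed subsets of $C_i$, any $p \in \partial C_j$ is the limit of points lying in $C_i$ but outside every successor, taken from a small enough neighborhood of $p$. The reverse containment is immediate. Consequently
\begin{equation*}
M \setminus D_i \;=\; (M \setminus C_i)\;\cup\;\bigcup_{C_j \in \sS_i} C_j^{\circ},
\end{equation*}
and each component $K$ of $M \setminus D_i$ is either a component of $M \setminus C_i$ with $\partial K \subset \partial C_i$, or a component of $C_j^{\circ}$ for a unique successor $C_j \in \sS_i$ with $\partial K \subset \partial C_j$.

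Parts (a) and (b) then reduce to a boundary-matching argument. Because $\pi$ is constant-boundary for $\sC$, it takes a single value $c_K$ on every $\partial K$, so defining $\pi^* = \pi$ on $D_i$ and $\pi^* \equiv c_K$ on each component $K$ (choosing $c_K$ arbitrarily, say zero, when $\partial K = \emptyset$, which happens only if $K$ is a whole component of $M$ missing $D_i$) produces a continuous flat extension. When $M$ is connected, every component $K$ is a proper open subset, so $\partial K \neq \emptyset$ (else $K$ would be clopen in a connected space and thus all of $M$, contradicting $D_i \neq \emptyset$); this forces each $c_K$ and yields uniqueness.

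For (c), fix any $C_k$ and verify $\pi^*|\partial C_k$ is constant by case analysis on the nesting relation between $C_k$ and $C_i$. If $\partial C_k \subset D_i$ (in particular when $C_k = C_i$ or $C_k \in \sS_i$), the claim reduces to the constant-boundary hypothesis on $\pi$. Otherwise every component of $M \setminus D_i$ that $\partial C_k$ meets is controlled by a single parent boundary: when $C_k \supsetneq C_i$ or $C_k \cap C_i = \emptyset$, those components lie in $M \setminus C_i$ and carry the value $\pi|\partial C_i$; when $C_k$ lies strictly inside some successor $C_j$ (or any of its deeper descendants), those components lie in $C_j^{\circ}$ and carry the value $\pi|\partial C_j$. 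Boundary-overlap edge cases (e.g.\ $\partial C_k$ meeting $\partial C_i$ or $\partial C_j$) resolve automatically, since continuity of $\pi$ at the overlap forces the relevant parent-boundary constant to coincide with $\pi|\partial C_k$. I expect this case analysis in (c) to be the main obstacle: it depends on carefully tracking which parent boundary governs each component of $M \setminus D_i$ and verifying that $\partial C_k$ never straddles components whose parent boundaries disagree.
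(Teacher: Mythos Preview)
Your overall strategy---analyze the components of $M\smallsetminus D_i$, show $\pi$ is constant on the boundary of each, and extend by those constants---matches the paper's, but your structural identity $D_i = C_i \smallsetminus \bigcup_{C_j\in\sS_i} C_j^{\circ}$ is false in general, and the rest of the argument rests on it.

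The failure occurs exactly when a successor $C_j$ shares boundary with $C_i$, i.e.\ when $\partial C_j\cap\partial C_i\neq\emptyset$. Concretely, take $M=[-1,3]$, $C_i=[0,2]$, and a single successor $C_j=[0,1]$. Then $C_i\smallsetminus C_j=(1,2]$, so $D_i=\overline{(1,2]}=[1,2]$, while $C_i\smallsetminus C_j^{\circ}=\{0\}\cup[1,2]$. The point $0\in\partial C_j\cap\partial C_i$ is \emph{not} in $D_i$: every small neighbourhood of $0$ meets $C_j^c$, but only on the $C_i^c$ side, so $0$ is not a limit of points in $C_i\smallsetminus C_j$. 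This breaks your sentence ``any $p\in\partial C_j$ is the limit of points lying in $C_i$ but outside every successor''. Consequently $M\smallsetminus D_i=[-1,1)\cup(2,3]$, whereas $(M\smallsetminus C_i)\cup C_j^{\circ}=[-1,0)\cup(0,1)\cup(2,3]$, and your dichotomy ``each component $K$ is either a component of $M\smallsetminus C_i$ or a component of some $C_j^{\circ}$'' fails: the component $[-1,1)$ is neither.

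The paper anticipates precisely this situation. Its auxiliary Lemma~\ref{constant_boundary_lemma} works with $\overline{M\smallsetminus D_i}=\overline{C_i^{\,c}}\cup\bigcup_{\sS_i}C_j$ and explicitly notes that a successor $C_j$ may have boundary intersecting $\partial C_i$; it then argues that whenever two successors (or a successor and a piece of $\overline{C_i^{\,c}}$) lie in the same component of $\overline{M\smallsetminus D_i}$, they are joined by a path through $C_i^{\,c}$, which forces the relevant $\pi$-boundary values to coincide via $\pi(\partial C_i)$. Your argument can be repaired along the same lines: drop the false identity, allow components of $M\smallsetminus D_i$ to be glued from pieces of $M\smallsetminus C_i$ and several $C_j^{\circ}$ through shared points of $\partial C_i$, and observe that any such gluing forces $\pi(\partial C_j)=\pi(\partial C_i)$, so $\pi$ is still constant on $\partial K$. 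Once (a) is fixed, your treatments of (b) and (c) go through with only cosmetic changes.
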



We will need:

\begin{lemma}
\label{constant_boundary_lemma}
With $\sC$, $\sD$ and $\pi$ as in the previous lemma:
\begin{tightList}
\item[(a)] $\pi$ is constant on the boundary of each component of $\overline{M \smallsetminus D_i}$.
\item[(b)]  Let $z_1 \ldots z_n$ be the unique values among $\pi(\partial C_i)$ and all $\pi(\partial C_j)$, for successors $C_j \in \sS_i$. 
Then the only values that $\pi$ may have on then boundary of any component of  $\overline{M \smallsetminus D_i}$ is one of  $z_1 \ldots z_n$, and for each 
$z_k$ there exists at least one component having boundary upon which $\pi$ has this value.
\end{tightList}
\end{lemma}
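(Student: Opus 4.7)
The plan is to analyze the components of $\overline{M \smallsetminus D_i}$ by identifying each as lying either outside $C_i$ or inside a successor $C_j \in \sS_i$, and then invoke the constant-boundary hypothesis on $\pi$ at $\partial C_i$ and at each $\partial C_j$.

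First I would establish a structural description of $\overline{M \smallsetminus D_i}$. Because $D_i \subseteq C_i$ and, by nesting of $\sC$, every $C_j$ with $j>i$ is either disjoint from $C_i$ or contained in $C_i$, the set $C_i \cap \bigcup_{j>i} C_j$ reduces by maximality to $\bigcup_{C_j \in \sS_i} C_j$. Taking closures and using the finite-graph topology of $M$, this yields
\[
\overline{M \smallsetminus D_i} \;=\; \overline{M \smallsetminus C_i}\;\cup\;\bigcup_{C_j \in \sS_i} C_j,
\]
and since the two pieces on the right share no interior, each component $K$ of $\overline{M \smallsetminus D_i}$ lies entirely in one of them---either $K \subseteq \overline{M \smallsetminus C_i}$, or $K \subseteq C_j$ for a unique $C_j \in \sS_i$.

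Part (a) then follows at once. If $K \subseteq \overline{M \smallsetminus C_i}$ then $\partial K \subseteq \partial C_i$, because $K$ meets $D_i$ only along the boundary of $C_i$; if $K \subseteq C_j$ then analogously $\partial K \subseteq \partial C_j$. Since $\pi$ is a constant-boundary function for $\sC$, it takes a single value on $\partial C_i$ and a single value on each $\partial C_j$, so $\pi$ restricted to $\partial K$ is constant, with value in $\{\pi(\partial C_i)\} \cup \{\pi(\partial C_j):C_j\in\sS_i\}$.

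For part (b), part (a) already shows that every boundary value of $\pi$ over a component of $\overline{M\smallsetminus D_i}$ lies in $\{z_1,\ldots,z_n\}$. To realize each $z_k$: if $z_k = \pi(\partial C_i)$ then $\partial C_i$ is nonempty, hence $M\smallsetminus C_i \neq \emptyset$, and any component of $\overline{M\smallsetminus C_i}$ meeting $\partial C_i$ is a witness; if $z_k = \pi(\partial C_j)$ for some $C_j \in \sS_i$, then $C_j$ itself (having nonempty interior) contains at least one component of $\overline{M\smallsetminus D_i}$ with boundary in $\partial C_j$. The main obstacle I expect is the structural claim of the second paragraph, in particular verifying that no component of the complement straddles both an outside region and a successor; but the nesting of $\sC$ together with the finite-graph topology of $M$ reduces this to a routine point-set argument.
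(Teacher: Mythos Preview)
Your structural dichotomy in the second paragraph is false, and it is exactly the point where the lemma has content. From ``the two pieces on the right share no interior'' you cannot conclude that a connected subset of their union lies entirely in one of them (think of $[0,1]\cup[1,2]$). Concretely, a successor $C_j\in\sS_i$ may satisfy $\partial C_j\cap\partial C_i\neq\emptyset$; nothing in the nesting axioms forbids this, and the paper explicitly allows it. When that happens, the component of $\overline{M\smallsetminus D_i}$ containing $C_j$ also contains a component of $\overline{C_i^c}$, and may even pick up further successors $C_k$ via $\overline{C_i^c}$. So a single component can carry boundary pieces from $\partial C_i$ and from several $\partial C_j$'s simultaneously, and your case split ``$\partial K\subseteq\partial C_i$ or $\partial K\subseteq\partial C_j$'' breaks down.

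What rescues part (a) is not that straddling cannot occur, but that whenever it does the $\pi$-values are forced to agree. The paper's argument is: if two successors $C_j,C_k$ lie in the same component of $\overline{M\smallsetminus D_i}$, then (since the successors are pairwise disjoint closed sets) any connecting path must pass through $\overline{C_i^c}$, hence through $\partial C_i$; this forces $\partial C_j$ and $\partial C_k$ each to meet $\partial C_i$, and the constant-boundary hypothesis then gives $\pi(\partial C_j)=\pi(\partial C_i)=\pi(\partial C_k)$. Your proof needs this step in place of the false dichotomy. Your treatment of (b) is fine once (a) is repaired, since you correctly observe $\partial K\subseteq\partial C_i\cup\bigcup_{\sS_i}\partial C_j$.
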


\begin{proof}
Let $K = \overline{M \smallsetminus D_i} = \overline{C_i^c} \cup (\underset{\sS_i}{\cup}C_j)$.
Then $\partial K = \partial C_i \cup (\underset{\sS_i}{\cup}\partial C_j)$.

$C_i$'s successors $C_j \in \sS_i$ are disjoint, but  a successor $C_j$ may have boundary intersecting $\partial{C_i}$.
Because each is connected, each successor $C_j$ lies in a single component of $K$.
To prove (a), we must show that when $C_i$'s successors $C_j, C_k$ lie in the same component of $K$, then $ \pi(\partial C_j) = \pi(\partial C_k)$.
But when they lie in the same component, then there must exist a path connecting them in $C_i^c$, from which it follows that $\pi(\partial C_i) = \pi(\partial C_j) = \pi(\partial C_k)$.
Statement (b) follows immediately.
\end{proof}

\begin{proof}[Proof of Flat Extension Lemma (\ref{flat_extension_lemma})]
Using the notation of the preceding proof, 
extend  $\pi|D_i$ to $\pi^*:M \to \Real$ by defining $\pi^*$ on each component of $K' \subset K$ to have constant value $\pi(\partial K')$, thereby proving (a).
Statements (b) \& (c) follow immediately.
\end{proof} 


\newpage

\subsection{Additive Decomposition}
\label{additive_decomposition}

In this section we prove part of the  Varilet Transform Theorem (\ref{transform}).

\begin{lemma}[Additive Decomposition]
\label{additive_decomposition_lemma} 
$f = \sum \alpha_ig_i$.
\end{lemma}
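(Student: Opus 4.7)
The plan is to pass to the middle space. Since $f = \lambda \circ \mu$ and each $g_i = \gamma_i \circ \mu$, and $\mu$ is surjective, it suffices to prove $\lambda(p) = \sum_i \alpha_i \gamma_i(p)$ for every $p \in M$. I will verify this pointwise on the dense open set of interior points of the supports, then extend by continuity.

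Fix a component $M^*$ of $M$ with root index $i_0$. Only indices $i$ with $D_i \subset M^*$ can contribute on $M^*$, since by construction $\gamma_i \equiv 0$ outside the unique component meeting $D_i$. Pick $p \in M^*$ lying in the interior of some support $D_{i_k}$; such points are dense because the supports cover $M$ and meet only on their boundaries (section \ref{supports}). The set $I(p) = \{i \in \sI : p \in C_i\}$ is totally ordered by inclusion (by the nesting axiom), so enumerating it in increasing index order gives a chain $i_0 < i_1 < \cdots < i_k$ with $C_{i_0} = M^*$, each $C_{i_{m+1}}$ necessarily the unique successor of $C_{i_m}$ containing $p$ (if some $C_j$ sat strictly between $C_{i_m}$ and $C_{i_{m+1}}$ it would also lie in $I(p)$), and $i_k$ the index of the deepest region of $\sC$ containing $p$.

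The heart of the argument is to identify $\lambda_i(p)$ in three cases. First, $p \in D_{i_k}$ gives $\lambda_{i_k}(p) = \lambda(p)$. Second, for $i = i_m$ with $m < k$, $p$ lies in $C_{i_{m+1}}^{\circ}$, a connected open subset of $M^* \smallsetminus D_{i_m}$ whose closure abuts $\partial C_{i_{m+1}} \subset \partial D_{i_m}$; lemma \ref{constant_boundary_lemma} forces $\lambda$ to take a single value on the boundary of the component of $M^* \smallsetminus D_{i_m}$ containing $p$, and that value is $\lambda(\partial C_{i_{m+1}})$, so the flat extension yields $\lambda_{i_m}(p) = \lambda(\partial C_{i_{m+1}})$. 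Third, for every other non-root $i$ with $D_i \subset M^*$, $p \notin C_i$, so the component of $M^* \smallsetminus D_i$ containing $p$ meets $M^* \smallsetminus C_i$, its boundary therefore meets $\partial C_i$, and lemma \ref{constant_boundary_lemma} forces $\lambda_i(p) = \lambda(\partial C_i)$; hence $\alpha_i \gamma_i(p) = \lambda_i(p) - \lambda(\partial C_i) = 0$. Summing the surviving contributions yields a telescoping sum:
\begin{equation*}
\sum_i \alpha_i \gamma_i(p) = \lambda(\partial C_{i_1}) + \sum_{m=1}^{k-1}\bigl(\lambda(\partial C_{i_{m+1}}) - \lambda(\partial C_{i_m})\bigr) + \bigl(\lambda(p) - \lambda(\partial C_{i_k})\bigr) = \lambda(p).
\end{equation*}
Each $\gamma_i$ is continuous on $M$ because the flat-extension values on components of $M^* \smallsetminus D_i$ match $\lambda$ at the shared boundary by lemma \ref{constant_boundary_lemma}, so the identity extends from the dense set to all of $M$; composing with $\mu$ gives $f = \sum_i \alpha_i g_i$.

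The main obstacle I anticipate is case three above: the component of $M^* \smallsetminus D_i$ containing $p$ may wind through several successor boundaries of $\partial D_i$, so a priori its flat-extension value could be one of several numbers $\lambda(\partial C_j)$. The force of lemma \ref{constant_boundary_lemma} is that $\lambda$ is nevertheless constant on the entire boundary of that component; exhibiting a single boundary point in $\partial C_i$ (which is automatic from $p \notin C_i$) then pins the value to $\lambda(\partial C_i)$ and makes the non-chain contributions vanish exactly.
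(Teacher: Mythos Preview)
Your proof is correct and takes a genuinely different route from the paper's. The paper argues by induction over the index set: it sets $\pi_i = \sum_{j\le i}\alpha_j\gamma_j$ and $E_i=\bigcup_{j\le i}D_j$, and shows $\pi_i|E_i=\lambda|E_i$ at each step, using the Link Lemma (\ref{link_lemma}) to transport the value $\lambda(\partial C_i)$ across the predecessor--successor link and the Zero Varilet Lemma (\ref{zero_varilet_lemma}) to kill cross terms. Your argument is instead a direct pointwise computation: at each interior point $p$ you isolate the chain $C_{i_0}\supsetneq\cdots\supsetneq C_{i_k}$ of regions containing $p$ and observe that the non-chain $\gamma_i$ vanish while the chain contributions telescope. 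Your case~3 is exactly the content of the Zero Varilet Lemma, and your case~2 identification $\lambda_{i_m}(p)=\lambda(\partial C_{i_{m+1}})$ is doing by hand what the paper packages as the Link Lemma (note that the component of $\overline{M^*\smallsetminus D_{i_m}}$ containing $p$ need not be $C_{i_{m+1}}$ itself --- it may absorb a piece of $\overline{C_{i_m}^c}$ --- but Lemma~\ref{constant_boundary_lemma} still forces the boundary value to equal $\lambda(\partial C_{i_{m+1}})$, since all successors in a single component share the same boundary value). Your approach has the virtue of making the telescoping structure transparent at each point; the paper's inductive framing pays off later because the same Link Lemma / link-recursion machinery is reused verbatim to prove the Filter Factor Lemma (\ref{filter_factor_lemma}).
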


To prove the Additive Decomposition Lemma,  we first \emph{link} constant-boundary function values in a tree structure, and then use an inductive argument exploiting the links. The Link Lemma is proved in the next subsection, with  proof of the Additive Decomposition Lemma in the section following.

\subsubsection{Link Lemma}

The varilet supports $\sD$ are fully determined by choice of varilet lens $\sC$. The inclusions within $\sC$ impart an identical tree structure on $\sD$.
 The next lemma describes how this tree is manifested in the boundaries of the varilet supports in $\sD$, and its implications for constant-boundary functions.

\begin{lemma}[Link Lemma]
\label{link_lemma}
Choose a varilet lens $\sC = \indexedFull{C}{i}{I}$, and let $\sD = \indexed{D_i}$ be the varilet supports.
Choose any index $i \in \sI$.

Then for each successor $C_j \in \sS_i$ there exist points $p_i \in \partial D_i$, $q_j\in  \partial D_j$, such that for any constant-boundary function $\pi$ for $\sC$,
$\pi(p_i) = \pi(q_j)= \pi(\partial C_j)$.
\end{lemma}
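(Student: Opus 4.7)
The plan is to exhibit a single point $p \in \partial C_j$ that lies simultaneously in both $\partial D_i$ and $\partial D_j$, and then to set $p_i = q_j = p$. Because every constant-boundary function $\pi$ for $\sC$ is by definition constant on $\partial C_j$, any such $p$ automatically satisfies $\pi(p) = \pi(\partial C_j)$, so the conclusion will follow for all such $\pi$ simultaneously.

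First I would verify that every point of $\partial C_j$ lies in $\partial D_i$. By the nesting property of $\sC$, any successor $C_k \in \sS_i$ with $k \neq j$ is disjoint from $C_j$, so sufficiently small neighborhoods of any $p \in \partial C_j$ avoid all such $C_k$. Since $p \in \partial C_j$ is a limit of points in $C_j^{\circ}$, which lies in $D_i^c$ (because $D_i = \overline{C_i \smallsetminus \bigcup_{k>i} C_k}$ is the closure of a set disjoint from the open set $C_j^{\circ}$), and is simultaneously a limit of points in $C_j^c \cap C_i$, which locally lies in $D_i$, it follows that $p \in \partial D_i$.

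Next I would show $\partial C_j \cap \partial D_j \neq \emptyset$. A point $p \in \partial C_j$ cannot lie in $C_k^{\circ}$ for any sub-successor $C_k \in \sS_j$, since $C_k^{\circ} \subset C_j^{\circ}$ is disjoint from $\partial C_j$. The only obstruction is therefore the edge case in which every point of $\partial C_j$ lies on $\partial C_k$ for some sub-successor $C_k$, forcing $\lambda(\partial C_k) = \lambda(\partial C_j)$. I would rule out this edge case using the finite-graph structure of $M$ together with the structural observations preceding the algorithm: because $D_j$ has nonempty interior and $C_j$ is connected, there is a path in $C_j$ from $D_j^{\circ}$ to $\partial C_j$, and the first point $p$ on this path that lies in $\partial C_j$ is also a limit of points in $D_j$, hence lies in $D_j \cap \partial C_j$. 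For this $p$, sufficiently small neighborhoods meet $C_j^c \subset D_j^c$, so $p \in \partial D_j$ as well.

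The main obstacle, as indicated, is the edge case in the previous paragraph: sub-successors of $C_j$ whose boundary values coincide with $\lambda(\partial C_j)$ could a priori absorb all of $\partial C_j$. I would dispose of this by a careful graph-theoretic argument using strict monotonicity of $\lambda$ on each edge of $M$ and the fact that the disjoint, strictly contained sub-successors of the connected set $C_j$ cannot collectively form a wall separating $\partial C_j$ from $D_j^{\circ}$.
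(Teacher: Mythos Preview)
Your overall plan—locate the link points inside $\partial C_j$ and use that $\pi$ is constant there—is exactly the paper's idea. However, you have made the task harder than necessary and in doing so introduced a genuine error.

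The lemma does \emph{not} require $p_i = q_j$. It only asks for $p_i \in \partial D_i$ and $q_j \in \partial D_j$ with $\pi(p_i)=\pi(q_j)=\pi(\partial C_j)$, and since $\pi$ is constant on all of $\partial C_j$, it suffices to show separately that $\partial C_j \cap \partial D_i \neq \emptyset$ and $\partial C_j \cap \partial D_j \neq \emptyset$. That is precisely what the paper does, in one line, from pairwise disjointness of the successors.

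Your stronger claim, that \emph{every} point of $\partial C_j$ lies in $\partial D_i$, is false. The gap is in the sentence ``$p$ is simultaneously a limit of points in $C_j^c \cap C_i$'': when $p \in \partial C_j \cap \partial C_i$, the nearby points of $C_j^c$ may all lie in $C_i^c$, so $p$ need not be a limit of $C_i \smallsetminus \bigcup_{k>i} C_k$ and hence need not lie in $D_i$ at all. Concretely, take $M$ to be a path with vertices $a,b,c,d$ and $\lambda$-values $0,1,0,1$; let $C_i$ be the subpath between the two points at height $\tfrac12$ on edges $ab$ and $cd$, and let $C_j$ be the subpath between the points at height $\tfrac12$ on edges $ab$ and $bc$. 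Then the left endpoint $p_1$ of $C_j$ coincides with the left endpoint of $C_i$, and $D_i = \overline{C_i\smallsetminus C_j}$ is the subpath from the right endpoint of $C_j$ to the right endpoint of $C_i$; so $p_1 \in \partial C_j$ but $p_1 \notin D_i$, hence $p_1 \notin \partial D_i$.

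Once you drop the insistence on a single common point, the argument becomes the paper's: pick $p_i$ from $\partial C_j \cap \partial D_i$ and $q_j$ from $\partial C_j \cap \partial D_j$. Your second paragraph, restricted to showing $\partial C_j \cap \partial D_i \neq \emptyset$, and your third paragraph, showing $\partial C_j \cap \partial D_j \neq \emptyset$, then suffice—and the elaborate ``edge case'' analysis in the third paragraph can be replaced by the same disjoint-successors observation you used in the second.
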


We call $(p_i, q_j)$ a \emph{link pair}, because it associates a point in predecessor $D_i$ to a point in successor $D_j$ having equal $\pi$-value, independent of the choice of constant-boundary $\pi$.
There exists a link pair $(p_i, q_j)$ for all indices $j \in \sI$ except the root indices.

\begin{proof}
Recalling definition \ref{supports}, it follows from pairwise disjointness of the $C_i$'s successors that $\partial C_j$ intersects both $\partial D_i$ and $\partial D_j$.
\end{proof}

\subsubsection{Proof of Additive Decomposition}

Using functions $\lambda_i,\gamma_i$  defined in steps (2) \& (3) of the Varilet Transform Algorithm,
we will use the Link Lemma to show $\lambda = \sum \alpha_i\gamma_i$.
Then, using $\mu$ to pull back from $M$ to $f$'s domain $X$,  we get $f = \sum \alpha_ig_i$.

We  use the following:

\begin{lemma}[Zero Varilet Lemma]
\label{zero_varilet_lemma}
Choose any varilet support $D_i$.

Then $\gamma_i|D_j \equiv 0$ for all indices $j$ such that $C_j \not\subset C_i$.
\end{lemma}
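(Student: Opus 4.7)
The plan is to split into cases based on where $D_j$ sits in $M$ relative to the component $M^*$ containing $D_i$. In the easy case, $D_j$ lies in a component of $M$ other than $M^*$; then by step (3) of the Varilet Transform Algorithm, $\gamma_i \equiv 0$ on that component and we are done. This in fact covers every $j$ when $i$ is a root index, because then $C_i = M^*$ is an entire component of $M$, and since $C_j$ is connected the assumption $C_j \not\subset C_i$ forces $C_j$ (hence $D_j \subset C_j$) to lie in a different component.

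The remaining case is when $i$ is not a root index and $D_j \subset M^*$. Here I plan to exploit nestedness of $\sC$: $C_j \not\subset C_i$ gives either $C_j \cap C_i = \emptyset$ or $C_i \subsetneq C_j$. The key structural claim I will establish is $D_j \cap C_i^\circ = \emptyset$. The disjoint case is immediate. In the case $C_i \subsetneq C_j$, the nesting axiom rules out $i < j$ (which would demand $C_i \supsetneq C_j$ or disjointness), so $i > j$, and therefore $C_i$ participates in the union $\bigcup_{k > j} C_k$; consequently $D_j = \overline{C_j \smallsetminus \bigcup_{k>j} C_k} \subset \overline{C_j \smallsetminus C_i} \subset M \smallsetminus C_i^\circ$.

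With the structural claim in hand, I will compute $\lambda_i$ on $D_j$ in two pieces. First, $\partial C_i \subset D_i$ since any boundary point of $C_i$ is approached by points of $C_i$ that avoid every successor $C_k \in \sS_i$; thus on $D_j \cap \partial C_i$ we have $\lambda_i = \lambda \equiv \lambda(\partial C_i)$. Second, for points of $D_j$ in $M^* \smallsetminus C_i$, each lies in some component $K$ of $M^* \smallsetminus D_i$. Because $\partial C_i \subset D_i$ acts as a barrier, $K$ cannot cross $\partial C_i$, and so $K$ sits entirely outside $C_i$ with $\partial K \cap D_i \subset \partial C_i$. By step (2), $\lambda_i$ is constant on $K$ with value determined by $\lambda$ on $\partial K \cap D_i$, hence $\lambda_i \equiv \lambda(\partial C_i)$ on $K$. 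Combining, $\lambda_i \equiv \lambda(\partial C_i)$ throughout $D_j$, and so $\gamma_i = \alpha_i^{-1}(\lambda_i - \lambda(\partial C_i)) \equiv 0$ on $D_j$, as required.

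The main obstacle I anticipate is the separation step: rigorously showing that a component of $M^* \smallsetminus D_i$ meeting $D_j$ lies on the outside of $C_i$, so that the flat-extension constant on it is forced to be $\lambda(\partial C_i)$ rather than one of the alternative successor values $\lambda(\partial C_k)$, $C_k \in \sS_i$, permitted by Lemma \ref{constant_boundary_lemma}. Once $\partial C_i \subset D_i$ is established and used as a separating barrier, the rest is nesting bookkeeping.
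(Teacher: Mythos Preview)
Your barrier argument rests on the claim $\partial C_i \subset D_i$, and this is where the proof breaks. The claim can fail whenever a successor $C_k \in \sS_i$ shares boundary with $C_i$. Concretely, take $M^* = [-1,2]$ with $C_i = [0,1]$ and a single successor $C_k = [0,\tfrac12]$; then $D_i = \overline{C_i \smallsetminus C_k} = [\tfrac12,1]$, and the boundary point $0 \in \partial C_i$ is \emph{not} in $D_i$. In general, if $p \in \partial C_i$ and a small neighbourhood of $p$ inside $C_i$ lies entirely in some successor $C_k$, then $p \notin D_i$. Your separation step then fails as well: in the example, the component $K$ of $M^* \smallsetminus D_i$ containing a point of $D_j \subset [-1,0]$ is $[-1,\tfrac12)$, which \emph{does} enter $C_i^\circ$, and $\partial\overline{K} \cap D_i = \{\tfrac12\} \subset \partial C_k$, not $\partial C_i$. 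So the very obstacle you flagged in your last paragraph is real, and the barrier $\partial C_i \subset D_i$ you propose to overcome it is not available.

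The conclusion nevertheless survives, and your argument is close to a repair: whenever $\partial C_i$ fails to lie in $D_i$ at a point $p$, necessarily $p \in \partial C_k$ for the unique successor $C_k$ containing $p$, and hence $\lambda(\partial C_k) = \lambda(p) = \lambda(\partial C_i)$. So even when the component $K$ leaks into $C_i^\circ$ through such a $C_k$, the flat-extension value it acquires (namely $\lambda(\partial\overline{K})$, cf.\ Lemma~\ref{constant_boundary_lemma}) is still $\lambda(\partial C_i)$. The paper sidesteps this component analysis altogether: it records that $\gamma_i$ is a constant-boundary function with $\gamma_i(\partial C_i)=0$, that $\gamma_i$ is constant on each such $D_j$, and then invokes the Link Lemma~(\ref{link_lemma}) to carry the value $0$ from $\partial C_i$ over to $\partial D_j$.
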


\begin{proof}
Step (3) of the Varilet Transform Algorithm states that $\gamma_i(\partial C_i) \equiv 0$.
When $C_j \not\subset C_i$ then $D_j \subset C_i^c$.
Since $\gamma_i$ is constant on each such $D_j$,
the Link Lemma (\ref{link_lemma}) implies $\gamma_i(\partial D_j) = \gamma_i(\partial C_i)  = 0$.
\end{proof}

We now continue with:

\begin{proof}[Proof of Additive Decomposition Lemma (\ref{additive_decomposition_lemma})]


For each index $i \in \sI$ let  
\begin{equation}
\label{pi-def}
\pi_i = \underset{j \le i}{\sum} \alpha_j\gamma_j \text{\ \ \  and\ \ \  }
E_i = \underset{j \le i}{\bigcup}D_j.
\end{equation}

We will show  that $\pi_i | E_i = \lambda|E_i$. Letting $n = \max(\sI)$, since $\pi_n =\lambda$ and $E_n = M$, this will prove $\lambda = \sum \alpha_i\gamma_i$. 

We proceed  by induction on increasing indices $i \in \sI$.

When $i$ is a root index,  steps (2) \& (3) of the Varilet Transform Algorithm state that $\pi_i | D_i = \alpha_i\gamma_i|D_i= \lambda|D_i$.
All indices $j < i$ are also root, so $\pi_i | E_i = \lambda|E_i$.

When $i$ is not a root index, we can assume by induction that $\pi_j | E_j = \lambda|E_j$  for all indices $j<i$.  We must show
\begin{equation*}
\lambda|E_i = \pi_i | E_i.
\end{equation*}
By  the Zero Varilet Lemma (\ref{zero_varilet_lemma}),  $\gamma_i|D_j \equiv 0$ for all $j < i$.
Let $k$ be the index of $C_i$'s predecessor. For index $j$, $k < j < i$, $C_j$ and $C_i$ are disjoint, so $\gamma_j|D_i \equiv 0$, again by  the Zero Varilet Lemma (\ref{zero_varilet_lemma}). 
Therefore, for any $j$, $k \le j < i$,
\begin{align*}
\pi_i | E_j &=  \pi_k | E_j\\
&= \lambda|E_j \text{\ \ \ by induction.}
\end{align*}
Thus, it suffices to show 
\begin{equation*}
\lambda|D_i = \pi_i|D_i.
\end{equation*}
Expanding the definition of $\pi_i$,
\begin{align*}
\pi_i|D_i &= \underset{j \le i}{\sum} \alpha_j(\gamma_j|D_i) \\
&= \alpha_i\gamma_i|D_i + \pi_k|D_i,
\end{align*}
and thus it suffices with
\begin{equation}
\label{expand}
\lambda|D_i =  \alpha_i\gamma_i|D_i + \pi_k|D_i.
\end{equation}
$\pi_k$ is constant on $D_i$; we can identify the constant value: The Link Lemma provides linked pair $(p_k, q_i)$, where $p_k \in \partial D_k$, $q_i \in \partial D_i$, 
and $\pi_k(p_k) = \pi_k(q_i) = \pi_k(\partial C_i)$. 

But by induction, $\pi_k(\partial C_i) = \lambda(\partial C_i)$. Thus equation (\ref{expand}) is equivalent to
\begin{equation*}
\lambda|D_i =  \alpha_i\gamma_i|D_i + \lambda(\partial C_i).
\end{equation*}
The definition of $\gamma_i$ in step (3) of the Varilet Transform Algorithm completes the proof.
 \end{proof}

\subsection{Varilet Basis}
\label{varilet_basis_section}

Proof of the Varilet Transform Theorem (\ref{transform}) is completed in this section.

\begin{lemma}[Varilet Basis]
\label{varilet_basis_lemma}
$\vt{f}{C} = \indexedFull{g}{i}{I}$ is a varilet basis.
\end{lemma}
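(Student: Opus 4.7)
The plan is to establish $\ttv(\sum a_i g_i) = \sum |a_i|$ for any real coefficients $a_i$. Write $h = \sum a_i g_i = \pi \circ \mu$ with $\pi := \sum a_i \gamma_i : M \to \Real$. Since each $\gamma_i$ is an affine transform of the flat extension $\lambda_i$, it is constant-boundary for $\sC$, and so is $\pi$. Because $\mu$ is monotone, factoring $h$ through $\pi$'s own mlf on $M$ and invoking uniqueness of the monotone-light factorization yields $\ttv(h) = \ttv(\pi)$; the same argument, applied piecewise, gives $\ttv(h|\mu^{-1}(D_i)) = \ttv(\pi|D_i)$ for each $i$.

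The central technical step is to show that on each support $D_i$, the function $\pi|D_i$ equals $a_i(\gamma_i|D_i)$ plus a single real constant $c_i$; equivalently, $\gamma_k|D_i$ is constant for every $k \ne i$. When $C_i \not\subset C_k$ (covering both $C_k \cap C_i = \emptyset$ and $C_k \subsetneq C_i$), the Zero Varilet Lemma (\ref{zero_varilet_lemma}) gives $\gamma_k|D_i \equiv 0$. In the remaining case $C_i \subsetneq C_k$, the set $C_i$ lies inside a unique direct successor $C_m \in \sS_k$ with $C_i \subseteq C_m$, and $C_m$ constitutes a single connected component of $\overline{M \smallsetminus D_k}$; since $\gamma_k$ is a flat extension off $D_k$ and, by the Constant Boundary Lemma (\ref{constant_boundary_lemma}), takes a single value on the boundary of that component, it is uniformly equal to $\gamma_k(\partial C_m)$ throughout $C_m$, and in particular on $D_i$. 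Summing these constants over $k \ne i$ produces $c_i$.

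Once the central claim is in hand, the numerical identity $\ttv(\pi|D_i) = |a_i|$ follows cleanly. Steps (2) and (3) of the algorithm give $\gamma_i|D_i = \alpha_i^{-1}(\lambda|D_i - \lambda(\partial C_i))$ on $D_i$ (or $\alpha_i^{-1}(\lambda|D_i)$ for root indices), so $\pi|D_i = (a_i/\alpha_i)(\lambda|D_i) + \text{constant}$, an affine transform of $\lambda|D_i$. Invariance of $\ttv$ under constant shifts and its homogeneity under scaling, combined with the identity $\alpha_i = \ttv(\lambda|D_i)$, yield $\ttv(\pi|D_i) = |a_i|$.

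To assemble, apply the $\ttv$ Decomposition Lemma (\ref{ttv_decomposition}) to $h$ with the cover of $h$'s middle space by the images of the $D_i$ under $h$'s monotone factor — a compact cover with only boundary intersections, inherited from $\{D_i\}$ and the quotient structure. The resulting summands each equal $\ttv(\pi|D_i)$ by the monotone-preservation argument of the opening paragraph, giving $\ttv(h) = \sum_i |a_i|$. I expect the main obstacle to be the case $C_i \subsetneq C_k$ of the central claim: verifying that $C_i$ actually sits inside a single connected component of $\overline{M \smallsetminus D_k}$, so that the piecewise-constant flat extension $\gamma_k$ is in fact a single constant on all of $D_i$ rather than possibly different constants on different pieces. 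This demands careful unpacking of the recursive definitions of the $D_i$ and the successor structure of $\sC$; once secured, the remainder is bookkeeping.
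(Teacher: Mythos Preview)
Your approach is essentially correct and parallels the paper's, though you take a more direct route to the key local identity. The paper introduces a separate ``filter factor'' $\psi$ by link recursion (Definition~\ref{filter_factor}) and then proves $\sum a_i g_i = \psi \circ \mu$ (Lemma~\ref{filter_factor_lemma}); you instead work directly with $\pi = \sum a_i\gamma_i$ and establish $\pi|D_i = (a_i/\alpha_i)\,\lambda|D_i + \text{const}$ by showing each $\gamma_k|D_i$ is constant for $k \ne i$. This is cleaner and bypasses the Filter Factor Lemma entirely. One small correction: in the case $C_i \subsetneq C_k$, the successor $C_m$ need not \emph{constitute} a full component of $\overline{M \smallsetminus D_k}$ --- it may be joined to other pieces through $\overline{C_k^c}$ --- but since $C_m$ is connected it lies in a single such component, and that is all you need.

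The real obstacle, however, is not your central claim but the assembly step, which you treat lightly. Passing to $h$'s middle space $M_\pi$ via the monotone quotient $\mu_\pi:M\to M_\pi$, you need (i) that $\{\mu_\pi(D_i)\,:\, a_i\ne 0\}$ covers $M_\pi$ with only boundary intersections, and (ii) that $\ttv(\lambda_\pi|\mu_\pi(D_i)) = \ttv(\pi|D_i)$. Neither is automatic ``from the quotient structure'': for (i) one must verify that $\mu_\pi$ is a homeomorphism on each $D_i^\circ$ (when $a_i\ne 0$) and that distinct images cannot overlap on interiors; for (ii) your ``monotone-preservation argument'' does not literally apply, since $\mu_\pi|D_i$ need not be monotone (it can identify several boundary points of $D_i$), and one instead uses the homeomorphism on $D_i^\circ$ to transfer edge lengths. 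The paper isolates exactly this content as the Filter Quotient Lemma (Lemmas~\ref{filter_quotient_lemma} and~\ref{filter_quotient_Lemma}) and splits into the easy case (all $a_i\ne 0$, where $M_\pi = M$) and the general case. Your sketch is sound, but the step you flagged as the main obstacle is routine, while the step you dismissed in a sentence is where the paper spends two lemmas.
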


We start by defining \emph{link recursion}, a definitional schema whereby the Link Lemma (\ref{link_lemma}) enables piecewise definition of constant-boundary functions for $\sC$.
Section \ref{varilet_filter_factor} uses link recursion to define  the \emph{filter factor $\psi$},  proving that it describes varilet filtered versions of $f$.
Section \ref{proof_part_A}  describes how proof of the Valrlet Basis Lemma (\ref{varilet_basis_lemma}) can be split into two cases A \& B, and we prove case A.
The development continues in section \ref{filter_quotient} with definition of the \emph{filter quotient $\phi$}, and identification of the \mlf of varilet filtered functions, thereby enabling proof of  case B, in section \ref{proof_part_B}.

\subsubsection{Link Recursion}

This section describes \emph{link recursion}, a method for piecewise definition of $\pi:M \to \Real$, a new constant-boundary function for $\sC$. Link recursion is based on the Link Lemma (\ref{link_lemma}).

To define a  constant-boundary function 
$\pi:M \to \Real$ using link recursion, we iterate through the indices $i \in \sI$ in increasing order, defining $\pi$ on $D_i$ at each iteration. 

The root indices $i$ come first in the iteration; for each root index $i$, $\pi_i|D_i$ may be independently defined as any \pmf function that is constant on each $\partial C_j$, for $C_i$'s successors $C_j$.

Following the root indices, for each iteration $i$, let $k < i$ be the index of $C_i$'s predecessor, and let $(p_k, q_i)$ be their link pair. $\pi|D_i$ may be independently defined as any \pmf function such that $\pi$ is constant on each $\partial C_j$, for $C_i$'s successors $C_j$, and such that $\pi$ is constant on $\partial C_i$ with value $\pi(\partial C_i) = \pi(p_k)$, noting that the value of $\pi(p_k)$ has previously been defined in iteration $k$. 

Link recursion results in a well-defined, continuous, \pmf function $\pi$, a constant-boundary function for $\sC$.

\subsubsection{Varilet Filter Factor}
\label{varilet_filter_factor}

In this section we discuss varilet filters in relation to the Varilet Transform Algorithm. 

The Varilet Transform Algorithm  results in functions $\vt{f}{C} = \indexed{g_i}$ and amplitudes $\indexed{\alpha_i}$ such that $f = \sum \alpha_ig_i$. 
For filter coefficients $\sA = \indexedFull{a}{i}{I}$, we define notation for the varilet filtered function:
\begin{equation*}
\vf{f}{C}{A} = \sum a_ig_i.
\end{equation*}

To prove that  $\vt{f}{C}$ is a varilet basis, we need to show that $\ttv(\sum a_ig_i) = \sum |a_i|$, for every choice of  $\indexedFull{a}{i}{I}$.
Therefore it should not be surprising that proof of the Varilet Basis Lemma (\ref{varilet_basis_lemma}) uses results that apply to varilet filters generally.

\begin{definition}[Varilet Filter Factor $\psi$]
\label{filter_factor}
Choose a varilet lens $\sC= \indexedFull{C}{i}{I}$, and let $\sD=\indexed{D_i}$ be the varilet supports.  

For any choice of filter coefficients $\sA =\indexed{a_i}$,
the \emph{filter factor} $\psi_\sA:M \to \Real$ is defined by link recursion:
\begin{equation*}
\label{psi-def}
\psi_\sA(p) = 
\begin{cases} 
(a_i/\alpha_i)\lambda(p) &  p \in D_i \text{, when } i  \text{ is a root index; }\\
(a_i/\alpha_i)\big(\lambda(p)-  \lambda(\partial C_{i})\big) + \psi_\sA(\partial C_{i}) &  p \in D_{i} \text{, iterating over increasing } i.
\end{cases}
\end{equation*}
\end{definition}

$\psi_\sA$ is a \pmf  constant-boundary function for $\sC$.

Let $\sF$ be the collection of all indices $j$ such that $a_j = 0$; then $\psi_\sA$ is a flat extension of $D =\psi_\sA|(\underset{j \not\in \sF}{\cup}D_j)$.  

The name ``filter factor'' is motivated by the substitution of $\psi$ for $\lambda$ in $f$'s monotone-light factorization in:

\begin{lemma}[Filter Factor]
\label{filter_factor_lemma} 
Let $\vt{f}{C} = \indexedFull{g}{i}{I}$, let $\sA = \indexed{a_i}$ be any choice of varilet filter coefficients, and let $\psi_\sA$ be the filter factor. 

Then
$\sum a_ig_i = \psi_\sA \circ \mu$.
\end{lemma}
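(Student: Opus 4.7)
The plan is to reduce the claim to an equality of scalar functions on the middle space $M$ and then run a near-verbatim copy of the Additive Decomposition proof. Since each varilet is $g_i = \gamma_i \circ \mu$, linearity gives $\sum a_i g_i = (\sum a_i \gamma_i) \circ \mu$, so it suffices to establish
\begin{equation*}
\sum_{i \in \sI} a_i \gamma_i = \psi_\sA \quad \text{on } M.
\end{equation*}
The Additive Decomposition Lemma is exactly the special case $a_i = \alpha_i$ (in which case $\psi_\sA$ collapses to $\lambda$), so the strategy is to replay that proof with the coefficient $\alpha_j$ replaced by $a_j$ and the target function $\lambda$ replaced by $\psi_\sA$.

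Mirroring the Additive Decomposition argument, I would introduce partial sums and partial covers
\begin{equation*}
\tau_i \;=\; \sum_{j \le i} a_j \gamma_j,\qquad E_i \;=\; \bigcup_{j \le i} D_j,
\end{equation*}
and prove by induction on increasing $i \in \sI$ that $\tau_i|E_i = \psi_\sA|E_i$. For a root index $i$, the Zero Varilet Lemma (\ref{zero_varilet_lemma}) forces $\gamma_j|D_i \equiv 0$ for every other root $j$ (its $C_j$ lies in a different component and is disjoint from $C_i$), so $\tau_i|D_i = a_i\gamma_i|D_i = (a_i/\alpha_i)\,\lambda|D_i$ by step~(3) of the algorithm; this is precisely the root clause in Definition~\ref{filter_factor}. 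Combined with the inductive hypothesis applied to earlier root indices (whose $D_j$ lie in still other components, where $\gamma_i \equiv 0$), the equality extends to all of $E_i$.

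For non-root $i$ with predecessor $k$, the same two applications of the Zero Varilet Lemma used in the Additive Decomposition proof collapse the sum: for $j<i$ with $C_j \not\subset C_i$ we have $\gamma_i|D_j \equiv 0$, and for $k<j<i$ we have $C_j \cap C_i = \emptyset$ so $\gamma_j|D_i \equiv 0$. Hence $\tau_i|D_i = a_i\gamma_i|D_i + \tau_k|D_i$, and $\tau_k$ is constant on $D_i$ (each summand $\gamma_j$, $j \le k$, is constant on $D_i$ since $D_i$ lies inside a single component of $M \smallsetminus D_j$). The Link Lemma (\ref{link_lemma}) supplies a pair $(p_k, q_i)$ with $\tau_k(q_i) = \tau_k(p_k) = \tau_k(\partial C_i)$, and the inductive hypothesis at $p_k \in D_k \subset E_k$ gives $\tau_k(p_k) = \psi_\sA(p_k) = \psi_\sA(\partial C_i)$. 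Substituting $\gamma_i|D_i = \alpha_i^{-1}(\lambda|D_i - \lambda(\partial C_i))$ from step~(3) yields
\begin{equation*}
\tau_i|D_i \;=\; (a_i/\alpha_i)\bigl(\lambda|D_i - \lambda(\partial C_i)\bigr) + \psi_\sA(\partial C_i),
\end{equation*}
which is exactly the recursive clause defining $\psi_\sA|D_i$. Taking $i = \max(\sI)$ gives $\tau_i = \sum a_i\gamma_i$ and $E_i = M$, completing the reduction.

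I expect no serious obstacle; the proof is almost a transcription of the Additive Decomposition Lemma. The only delicate point is verifying that $\tau_k$ is genuinely constant on $D_i$ (so that "the value $\tau_k(\partial C_i)$" is unambiguous), and then threading the Link Lemma through the inductive identification of that constant with $\psi_\sA(\partial C_i)$. Both moves are already made in the Additive Decomposition proof, so the present proof is essentially that proof with symbolic substitutions.
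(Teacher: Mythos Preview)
Your proposal is correct and takes essentially the same approach as the paper. The only cosmetic difference is that you first reduce to the middle space (showing $\sum a_i\gamma_i = \psi_\sA$ on $M$) and then compose with $\mu$, whereas the paper runs the identical induction directly on $X$ with $X_i = \mu^{-1}D_i$ and the functions $g_i$; the use of the Zero Varilet Lemma, the Link Lemma, and the step~(3) formula for $\gamma_i$ is the same in both.
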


\begin{proof}
Let $\psi = \psi_\sA$, let $f' = \psi \circ \mu$, and let $X_i = \mu^{-1}D_i$ for each $i \in \sI$.
We proceed  by induction on increasing indices $i \in \sI$.

Suppose $i \in \sI$ is a root index; the definition (\ref{filter_factor}) for $\psi$ together with root-index definition of $\gamma_i$ in step (3) of the Varilet Transform Algorithm imply that $f' | X_i = a_ig_i |X_i$. Any $j < i$ are also root indices; therefore  $f' = \underset{j \le i}{\sum}a_jg_j$ on $\underset{j \le i}{\cup}X_j$.

Now suppose $i \in \sI$ is not a root index. We may inductively assume
\begin{equation*}
f' = \underset{j < i}{\sum}a_jg_j \text{\ \ \  on\ \ \ } \underset{j < i}{\bigcup}X_j.
\end{equation*} 
We must show 
\begin{equation*}
f' = \underset{j \le i}{\sum}a_jg_j \text{\ \ \  on\ \ \ } \underset{j \le i}{\bigcup}X_j.
\end{equation*}
By the Zero Varilet Lemma (\ref{zero_varilet_lemma}), $g_i|X_j \equiv 0$ for all $j < i$,  and so by induction it suffices to show
\begin{equation}\label{induction3} 
 f' = \underset{j \le i}{\sum}a_jg_j \text{\ \ \  on\ \ \ }X_i.
 \end{equation}
Let $k < i$ be the index of $C_i$'s predecessor.
$g_j|X_i$ is constant for each $j \le k$, and therefore so is their sum.
Using the Link Lemma (\ref{link_lemma}) we can identify the constant value: 
\begin{align*}
\underset{j \le k}{\sum}a_j(g_j|X_i) &= f'(\partial X_i)\\
&= \psi(\partial X_i)\text{\ \ \ by induction.}
\end{align*}
For  index $j$, $k < j < i$, $C_j$ and $C_i$ are disjoint, and so $g_j|X_i \equiv 0$ by the Zero Varilet Lemma (\ref{zero_varilet_lemma}).
Therefore, expanding the  sum in equation (\ref{induction3}), it is sufficient to show
\begin{equation*}
 f' =  a_ig_i + \psi(\partial X_i)  \text{\ \ \  on\ \ \ }X_i,
\end{equation*}
 which follows  from definition of $\psi$ and the non-root-index definition of $\gamma_i$ in step (3) of the Varilet Transform Algorithm.
\end{proof}

\subsubsection{Proof of Varilet Transform Theorem, Part A}
\label{proof_part_A}

We must show that $\vt{f}{C} = \indexed{g_i}$ is a varilet basis; i.e.\ for every $\indexed{a_i}$
\begin{equation*}
\label{filter_ttv1}
\ttv(\sum a_ig_i) = \sum |a_i|.
\end{equation*}
In light of the Filter Factor Lemma  (\ref{filter_factor_lemma}), this is equivalent to
\begin{equation}
\label{filter_ttv2}
\ttv(\psi \circ \mu) = \sum |a_i|.
\end{equation}

The intuition for the proof is that the filter factor $\psi$ multiplicatively stretches the lengths of the edges in each varilet support $D_i$, using  coefficient $a_i$ as stretch factor.
To measure topological total variation of $f' = \psi \circ \mu = \sum a_ig_i$ we must identify $f'$'s middle space and light factor.
When the coefficients $a_i$ are all nonzero, then the \mlf of $f'$ is $\mu M\psi$, from which we easily show  (\ref{filter_ttv2}).
A small complexity enters the proof when one or more $a_i$ are zero; in this case the middle space of $f'$ is not $M$. We identify the middle space as a certain quotient of $M$; this will allow us to  show (\ref{filter_ttv2}).

We break up the proof of the Varilet Transform Theorem (\ref{transform}) into two parts: Part A covers the case where all filter coefficients $a_i \ne 0$; part B covers the case where one or more $a_i = 0$. 

\begin{proof}[Proof of theorem \ref{transform}, part A ($a_i \ne 0$ for all $i \in \sI$)]
Let $\mu'M'\lambda'$ denote the \mlf of $f' = \sum a_ig_i = \psi \circ \mu$.
Filter factor $\psi$ is light because $\lambda$ is and each $a_i \ne 0$, and therefore by uniqueness of monotone-light factorization  $\mu' = \mu$, $M' = M$, and $\lambda' = \psi$.

Because $M' = M$, $\sC$ is a varilet lens for $f'$. The $\ttv$ Decomposition Lemma (\ref{ttv_decomposition}) states that $\ttv(f') = \sum \ttv(\psi|D_i)$. By definition (\ref{filter_factor}) of $\psi$, 
\begin{align*}
\sum \ttv(\psi|D_i) &= \sum (|a_i|/\alpha_i) \ttv(\lambda|D_i) \\
&= \sum |a_i|,
\end{align*}
completing the proof. 
\end{proof}


\subsubsection{Varilet Filter Quotient}
\label{filter_quotient}

The section identifies the monotone-light factorization of filtered function $f' = \vf{f}{C}{A}$. 

\begin{lemma}[Filter Quotient Lemma]
\label{filter_quotient_lemma}
Suppose \pmf $f$ has \mlf $\mu M\lambda$. 
Choose a lens $\sC = \indexedFull{C}{i}{I}$, and choose  filter coefficients $\sA = \indexed{a_i}$, and suppose varilet filtered $f' = \vf{f}{C}{A}$ has \mlf $\mu'M'\lambda'$.

Then:
\begin{tightList}
\item[(a)] $M'$ is a certain quotient of $M$, described in the proof. 
\item[(b)] Denote the quotient map as $\phi_\sA: M \to M'$. Then $\phi_\sA$ is monotone.
\item[(c)] $\mu' = \phi_\sA \circ \mu$, $M' = \phi_\sA(M)$, and $\lambda' =  \psi_\sA \circ  \phi_\sA^r$, for any right inverse $\phi_\sA^r$.
\end{tightList}
\end{lemma}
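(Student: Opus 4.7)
The plan is to reduce everything to the mlf of the filter factor $\psi_\sA$ on $M$, using the Filter Factor Lemma to connect it back to $f'$. By lemma~\ref{filter_factor_lemma}, $f' = \psi_\sA \circ \mu$. Writing $\psi = \psi_\sA$ for brevity, since $\psi:M \to \Real$ is a piecewise monotone constant-boundary function, it carries its own mlf, which I denote $\phi N \tau$ with $\phi:M \to N$ monotone and $\tau:N \to \Real$ light. I would take $M' := N$ and $\phi_\sA := \phi$; the quotient in (a) is the one induced by $\phi$, namely
\begin{equation*}
p \sim q \iff p \text{ and } q \text{ lie in the same connected component of } \psi^{-1}(\psi(p)),
\end{equation*}
and this is precisely the construction of the middle space of $\psi$. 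This definition is independent of which common value $\psi(p) = \psi(q)$ we choose, and the relation is an equivalence relation since connected components partition each level set.

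Next, observe that $f' = \tau \circ \phi \circ \mu$. The composition $\phi \circ \mu$ is monotone: this follows from the standard characterization of monotone surjections between compact metric spaces as those for which preimages of connected sets are connected, a property obviously stable under composition. Applying uniqueness of the mlf to $f' = \tau \circ (\phi \circ \mu)$ with $\phi \circ \mu$ monotone and $\tau$ light yields $\mu' = \phi \circ \mu = \phi_\sA \circ \mu$, $M' = \phi(M) = \phi_\sA(M)$, and $\lambda' = \tau$. This gives (b) and the first two claims of (c).

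For the final claim of (c), recall that because $\phi$ is monotone, each fiber $\phi^{-1}(q)$ is connected, and by definition of $\phi$ as the monotone factor of $\psi$, $\psi$ is constant on $\phi^{-1}(q)$ with value $\tau(q)$. Therefore for \emph{any} right inverse $\phi_\sA^r:M' \to M$,
\begin{equation*}
\lambda'(q) = \tau(q) = \psi(\phi_\sA^r(q)),
\end{equation*}
so $\lambda' = \psi_\sA \circ \phi_\sA^r$ and the expression is independent of the choice of right inverse.

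The proof is short because it is almost entirely a bookkeeping exercise that reassembles the mlf of $\psi_\sA$ through the Filter Factor Lemma. The one item requiring genuine care is the compositional monotonicity of $\phi \circ \mu$; if the paper has not yet recorded this as a named fact, I would state and verify it explicitly, since it is the single non-tautological link in the chain that licenses invoking uniqueness of the mlf for $f'$. Everything else — the description of the quotient, the identity $M' = \phi(M)$, and the factorization of $\lambda'$ through any right inverse — is immediate from the mlf of $\psi_\sA$.
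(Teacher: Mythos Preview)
Your proof is correct and rests on the same pivot as the paper's: write $f' = \psi_\sA \circ \mu$ via the Filter Factor Lemma, exhibit a monotone--light factorization of this composite, and invoke uniqueness of the \mlfNoSpace. The difference is in how the factorization is obtained. The paper constructs the quotient explicitly --- collapse each connected component of $\bigcup_{j\in\sF} D_j$, where $\sF = \{j : a_j = 0\}$ --- and then verifies by hand that $\phi$ is monotone and that $\psi \circ \phi^r$ is light. You instead simply take the \mlf $\phi N \tau$ of $\psi$ itself, which hands you monotonicity of $\phi$ and lightness of $\tau$ for free, and then compose with $\mu$.

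Your route is shorter and arguably cleaner, but be aware that the paper's explicit description of the quotient is not incidental: it is used immediately afterward (Lemma~\ref{filter_quotient_Lemma} and the Part~B proof) to conclude, for instance, that $\phi$ is a homeomorphism on $D_i^\circ$ when $a_i \ne 0$. Your description of the quotient as the middle space of $\psi$ yields the same object, but to recover those downstream facts you would need to argue that on $D_i^\circ$ with $a_i \ne 0$ the map $\psi$ is light (being a nonzero affine reparametrization of $\lambda$), so the $\phi$-fibers there are singletons, while the non-singleton fibers are exactly the components of $\bigcup_{j\in\sF} D_j$. That reconciliation is easy but should be stated if your version is to replace the paper's.
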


\begin{definition}[Filter Quotient $\phi$]
The quotient map $\phi_\sA$ in lemma \ref{filter_quotient_lemma} is called \emph{$\sA$'s filter quotient}.
\end{definition}

\begin{proof}[Proof of lemma \ref{filter_quotient_lemma}]
The statements pertaining to the case where all $a_i \ne 0$, and therefore $M' = M$,  follow from the arguments in the proof of part A of Varilet Transform Theorem in the previous section.
We prove the lemma for the case that one or more $a_i = 0$. Let filter factor $\psi = \psi_\sA$.

Let $\sF$ be the nonempty collection of all indices $j \in \sI$ for which $a_j= 0$, and assume  $\sI \smallsetminus \sF$ is nonempty (since otherwise $\sum a_ig_i \equiv 0$). 

We claim that $M'$ is the quotient of $M$ that identifies all points in each component of $(\underset{j \in \sF}{\cup}D_j)$. Let $\phi = \phi_\sA$ be the quotient map.

We proceed by showing that $\phi$ and any right inverse $\phi^{r}$ make the following diagram commute, recalling that Filter Factor Lemma (\ref{filter_factor_lemma}) states that $f' = \sum a_ig_i = \psi \circ \mu$.
\begin{equation} \label{diag1}
\includegraphics[width=2in]{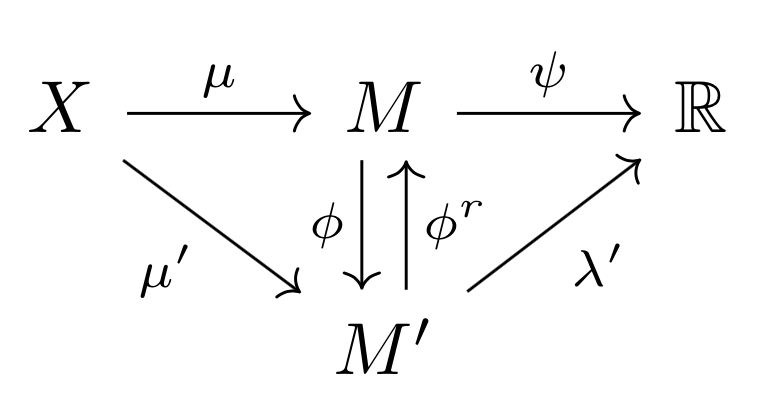}
\end{equation}

It is obvious that $\phi(M)$ is a finite graph. 
We show that $\phi(M)$ is the middle space $M'$ of $f'$ by showing that $\phi \circ \mu $ is monotone and that $\psi  \circ \phi^r$ is light, for any choice of right inverse $\phi^r$.

$ \phi \circ \mu$ is monotone because $ \phi$ is monotone: For any point $q \in \phi(M)$ the set $\phi^{-1}q$ is connected, being either singleton or a connected 
component of $(\underset{j \in \sF}{\cup}D_j)$.

We show that \emph{any} right inverse  $\phi^{r}$ can be chosen.
Let $p$ be any point chosen from $\phi^{-1}q$; we show that the value of $\psi(p)$ does not depend on the choice.
When $\phi^{-1}q$ is singleton then the choice is irrelevant. 
Otherwise, $\phi^{-1}q$ is a connected  component of $(\underset{j \in \sF}{\cup}D_j)$, and $\psi$ is constant on each such component.

To see that $ \psi \circ \phi^r $ is light, choose any $q \in \phi(M)$, and let $z =  (\psi \circ \phi^r)(q)$. We show that $(\psi \circ \phi^r)^{-1}z$ is a finite collection of points. Because $\phi(M)$ is a finite graph, if $(\psi \circ \phi^r)^{-1}z$ were infinite then it would contain an open set of $\phi(M)$, in which case $\psi^{-1}z$ must contain an open set of $M$. But any open set of  $M$ upon which $\psi$ is constant must lie in 
$(\underset{j \in \sF}{\cup}D_j)$, which is mapped by $\phi$ to a finite set of points in $\phi(M)$, a contradiction.

We have now confirmed that diagram (\ref{diag1})  commutes, i.e.\ that the \mlf of $f'$ is $\mu' = \phi \circ \mu$, $M' = \phi(M)$, and $\lambda' = \psi \circ \phi^r$.
\end{proof}

We state some immediate consequences regarding the filter quotient $\phi$:

\begin{Lemma}[Filter Quotient Lemma]
\label{filter_quotient_Lemma}
\begin{tightList}
\item
\item[(a)] When $a_i \ne 0$ then:
\begin{tightList}
\item[-] $\phi$ is a homeomorphism on $D_i^\circ$. 
\item[-] $\phi(D_i)$ may have fewer components than did $D_i$, but may not have more. 
\item[-] $\phi^{-1}(\partial \phi(D_i)) \subset \partial D_i$, but $\phi$ may also map points in $\partial D_i$  to $\phi(D_i)^\circ$.
\end{tightList}
\item[(b)] The collection $\{\phi(D_i)\ |\  a_i \ne 0\}$ covers $M'$ and the sets $\phi(D_i)$ intersect only at their boundaries. 
\end{tightList}
\end{Lemma}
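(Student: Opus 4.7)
The plan is to work directly from the explicit description of $\phi = \phi_\sA$ supplied by the preceding Filter Quotient Lemma: $\phi$ is the quotient map on $M$ that collapses each connected component of $\bigcup_{j \in \sF} D_j$ to a single point (where $\sF = \{j \in \sI : a_j = 0\}$) and is otherwise the identity on equivalence classes. The key structural fact to leverage throughout is that the varilet supports in $\sD$ cover $M$ and can meet only along their boundaries.

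For (a), I fix $i$ with $a_i \ne 0$, so $i \notin \sF$. Because $D_i^\circ$ is disjoint from every $D_j$ with $j \ne i$, it meets none of the components being collapsed, so each equivalence class intersecting $D_i^\circ$ is a singleton. Thus $D_i^\circ$ is a saturated open set on which $\phi$ is injective; by the definition of the quotient topology, $\phi(D_i^\circ)$ is open in $M'$ and $\phi|D_i^\circ$ is a homeomorphism onto its image. The ``fewer but not more components'' statement follows because $\phi$ is continuous and surjective onto $\phi(D_i)$, while two distinct components of $D_i$ can still fuse when their boundaries lie in a common collapsed component of $\bigcup_{j \in \sF} D_j$. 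For the boundary containment, any $p \in D_i^\circ$ satisfies $\phi(p) \in \phi(D_i^\circ) \subset \phi(D_i)^\circ$ by openness, so cannot lie in $\phi^{-1}(\partial \phi(D_i))$; hence, restricted to $D_i$, $\phi^{-1}(\partial \phi(D_i)) \subset \partial D_i$. The caveat that boundary points of $D_i$ may map into $\phi(D_i)^\circ$ is realized by exactly the fusion phenomenon above.

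For (b), surjectivity of $\phi$ gives $M' = \bigcup_{i \in \sI} \phi(D_i)$. For any $i \in \sF$, $D_i$ is contained in some component $K$ of $\bigcup_{j \in \sF} D_j$, which $\phi$ crushes to a point. Under the standing nontriviality assumption inherited from the preceding proof (each relevant component of $M$ contains some index outside $\sF$), $K$ meets the closure of some $D_k$ with $a_k \ne 0$, whence $\phi(K) \in \phi(D_k)$; therefore $\{\phi(D_i) : a_i \ne 0\}$ already covers $M'$. For the overlap-on-boundaries claim, if distinct $i, k$ with $a_i, a_k \ne 0$ shared a point $q \in \phi(D_i)^\circ \cap \phi(D_k)$, then by the homeomorphism established in (a) the unique preimage $p \in D_i^\circ$ of $q$ would also have to lie in $D_k$, contradicting $D_i^\circ \cap D_k = \emptyset$.

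I expect the main obstacle to be the quotient-topology bookkeeping in (a), specifically the saturation argument that establishes openness of $\phi(D_i^\circ)$ in $M'$; once that is secured, every remaining assertion in both (a) and (b) reduces to routine consequences of the disjoint-interiors property of the varilet supports.
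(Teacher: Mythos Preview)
The paper does not actually prove this lemma; it is introduced with ``We state some immediate consequences regarding the filter quotient $\phi$'' and left without argument. Your write-up therefore supplies details the paper omits, and your overall strategy---reading everything off the explicit description of $\phi$ as the quotient collapsing each component of $\bigcup_{j\in\sF}D_j$, combined with the disjoint-interiors property of $\sD$---is the right one. Your restriction of the third bullet of (a) to $D_i$ is in fact a quiet correction: the paper's literal assertion $\phi^{-1}(\partial\phi(D_i))\subset\partial D_i$ can fail, since the full preimage of a boundary point may be an entire collapsed component meeting $D_i$ only in $\partial D_i$ but extending outside $D_i$. What is true, and what you prove, is $D_i\cap\phi^{-1}(\partial\phi(D_i))\subset\partial D_i$, equivalently $\phi(D_i^\circ)\subset\phi(D_i)^\circ$.

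There is one small gap in your argument for (b). You assume that a point $q\in\phi(D_i)^\circ$ has its unique preimage in $D_i^\circ$, i.e.\ that $\phi(D_i)^\circ=\phi(D_i^\circ)$. But part (a) itself warns that $\phi$ may send points of $\partial D_i$ into $\phi(D_i)^\circ$---this occurs when a collapsed component glues two components of $D_i$ together without touching any other $D_k$ with $a_k\ne 0$---so $\phi(D_i)^\circ$ can strictly contain $\phi(D_i^\circ)$, and for such $q$ the preimage is a collapsed component, not a singleton in $D_i^\circ$. The repair is short: given $q\in\phi(D_i)^\circ\cap\phi(D_k)$, pick $p'\in D_k$ with $\phi(p')=q$; since $\phi^{-1}(\phi(D_i)^\circ)$ is open and each component of $D_k$ is the closure of its interior, there is $p''\in D_k^\circ$ near $p'$ with $\phi(p'')\in\phi(D_i)$. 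Because $k\notin\sF$ the fiber over $\phi(p'')$ is the singleton $\{p''\}$, forcing $p''\in D_i\cap D_k^\circ=\emptyset$, the desired contradiction.
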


\subsubsection{Proof of Varilet Transform Theorem, Part B}
\label{proof_part_B}

It is now easy to complete the proof of the Varilet Transform Theorem (\ref{transform}).

\begin{proof}[Proof of theorem \ref{transform}, part B]
Let $f' = \vf{f}{C}{A}$; we must show that $\ttv(f') = \sum |a_i|$.

The Filter Quotient Lemma (\ref{filter_quotient_Lemma}) and the $\ttv$ Decomposition Lemma (\ref{ttv_decomposition}) allow us to compute $\ttv(f')$,
\begin{equation}
\label{sum_for_theorem}
\ttv(f') = \underset{a_i \ne 0}{\sum} \ttv(\lambda'|\phi(D_i))
\end{equation}

When $a_i \ne 0$, since $\phi$ is a homeomorphism on $D_i^\circ$, it follows from the Filter Quotient Lemma (\ref{filter_quotient_lemma}) and definition (\ref{filter_factor}) of filter factor $\psi$ that
\begin{align*}
\ttv(\lambda'|\phi(D_i)) &= \ttv(\psi|D_i)\\
&=(|a_i|/\alpha_i)\ttv(\lambda|D_i)\\
&=|a_i|.
\end{align*}

Thus, equation (\ref{sum_for_theorem}) becomes
\begin{align*}
\ttv(f') &= \underset{a_i \ne 0}{\sum} |a_i|\\
&= \underset{\sI}{\sum} |a_i|,
\end{align*}
completing the proof.
\end{proof}

\bibliography{varilets}
\bibliographystyle{plain}

\end{document}